\def\Box{\leavevmode\vbox{\hrule
     \hbox{\vrule\kern4pt\vbox{\kern4pt}%
           \vrule}\hrule}}
\newcounter{appendix}
\def\appendix{\advance\c@appendix by 1
   \def\thesection{\Alph{section}}
   \ifnum\c@appendix=1 \setcounter{section}{-1} \fi
   \@startsection {section}{1}{\z@}{-3.5ex plus -1ex minus 
   -.2ex}{2.3ex plus .2ex}{\Large\bf}}
\def\paragraph#1{{\bf #1\ }}
\newtheorem{lemma}{Lemma}[section]  
\newtheorem{theorem}[lemma]{Theorem}
\newtheorem{remark}{Remark}[section]
\title{Self-organized Hydrodynamics in an Annular Domain: Modal Analysis and Nonlinear Effects}
\author{Pierre Degond$^1$ and Hui Yu$^2$}
\date{} 
\begin{document}
\maketitle

\vspace{0.5 cm}
\begin{center}
1. Department of Mathematics, Imperial College London\\
London, SW7 2AZ, United Kingdom\\
pdegond@imperial.ac.uk \\

2. Universit\'{e} de Toulouse; UPS, INSA, UT1, UTM\\
Institut de Math\'{e}matiques de Toulouse, France \\
and CNRS; Institut de Math\'{e}matiques de Toulouse, UMR 5219, France \\
hyu@math.univ-toulouse.fr
\end{center}

\vspace{0.5 cm}
\begin{abstract}
The Self-Organized Hydrodynamics model of collective behavior is studied on an annular domain. A modal analysis of the linearized model around a perfectly polarized steady-state is conducted. It shows that the model has only pure imaginary modes in countable number and is hence stable. Numerical computations of the low-order modes are provided. The fully non-linear model is numerically solved and nonlinear mode-coupling is then analyzed. Finally, the efficiency of the modal decomposition to analyze the complex features of the nonlinear model is demonstrated.
\end{abstract}

\medskip
\noindent
{\bf Acknowledgements:} 
This work was supported by the ANR contract 'MOTIMO' (ANR-11-MONU-009-01). The first author is on leave from CNRS, Institut de Math\'ematiques, Toulouse, France. He acknowledges support from the Royal Society and the Wolfson foundation through a Royal Society Wolfson Research Merit Award and by NSF Grant RNMS11-07444 (KI-Net). The second authors wishes to acknowledge the hospitality of the Department of Mathematics, Imperial College London, where this research was conducted. Both authors wish to thank F. Plourabou\'e (IMFT, Toulouse, France) for enlighting discussions.

\medskip
\noindent
{\bf Key words: } Collective dynamics; Self-organization; emergence; fluid model; hydrodynamic limit; symmetry-breaking; alignment interaction; polarized motion; spectral analysis; relaxation model; splitting scheme; conservative form; nonlinear mode-coupling.

\medskip
\noindent
{\bf AMS Subject classification: } 35L60, 35L65, 35P10, 35Q80, 82C22, 82C70, 82C80, 92D50.
\vskip 0.4cm

\section{Introduction}	

Self-organized collective dynamics is ubiquitous in the living world and emerges at all possible scales, from cell assemblies\cite{Yamao_etal_PlosOne11} to animal groups\cite{Tunstrom_etal_PlosCB13}. Collective motion happens when thousands of moving individual entities coordinate with each other through local interactions such as attraction and alignment. As a result, large-scale structures of typical sizes exceeding the inter-individual distances by several orders of magnitude are formed. One of the key questions is to understand how these self-organized structures spontaneously emerge from local interactions without the intervention of any leader. 
With this aim, Individual-Based Models (IBM), i.e. models that describe the behavior of each individual agent have been investigated\cite{Aoki_JapanSocSciFish82,Chate_etal_PRE08,Chuang_etal_PhysicaD07,Couzin_etal_JTB02,Cucker_Smale_IEEETransAutomControl07,Lukeman_etal_PNAS10,Peruani_etal_PRE06,Rejniak_Anderson_SysBiolMed11}. They consist of large systems of ordinary or stochastic differential equations the numerical resolution of which is computationally intensive. To describe large-scale structures coarse-grained models such as Fluid Models (FM) are needed. 
FM describe the dynamics of average quantities such as the mean density or mean velocity of the individuals\cite{Bertozzi_etal_Nonlinearity09,Toner_Tu_PRL95,Toner_etal_AnnPhys05,Topaz_etal_BMB06}. Attempts to derive FM from IBM of collective motion can be found in \cite{Ratushnaya_etal_PhysicaA07}. 
An intermediate step in the hierarchy of models consist of kinetic models (KM)\cite{Bellomo_Soler_M3AS12,Bertin_etal_PRE06,Bertin_etal_JPA09,Bolley_etal_AML12} which are Partial Differential Equations (PDE) describing the evolution of the probability density of the particles in phase-space. FM can be obtained as singular limits of the KM under the hypothesis that the individual scales are much smaller than the system scales. This PDE-based derivation of FM is referred to as the 'Hydrodynamic Limit'.

In \cite{Degond_Motsch_M3AS08}, the hydrodynamic limit of the Vicsek IBM\cite{Vicsek_etal_PRL95} has been performed using an intermediate kinetic description\cite{Bolley_etal_AML12,Degond_Motsch_M3AS08}. The Vicsek IBM describes a noisy system of self-propelled particles interacting through local alignment. In \cite{Degond_Motsch_M3AS08}, it has been shown that the absence of conservation laws (such as momentum conservation) resulting from self-propulsion can be overcome by introducing the new ``Generalized Collision Invariant'' concept. The resulting model, referred to as the ``Self-Organized Hydrodynamics (SOH)'' is written:
\begin{eqnarray}
&&\partial_t \rho + c_1\nabla\cdot(\rho\Omega) = 0,\label{model08-1}\\
&& \rho\left[\partial_t\Omega + c_2(\Omega\cdot\nabla)\Omega \right] + \Theta \mathcal{P}_{\Omega^\perp}\nabla\rho 
= 0,\label{model08-2}\\
&& |\Omega| = 1\label{model08-3},
\end{eqnarray}
where $\rho(x,t)\geq 0$ and $\Omega(x,t)\in \mathbb{R}^d$ are the density and the orientation of the mean velocity of the particles, $c_1>0$, $c_2 \in {\mathbb R}$ and $\Theta>0$ are given parameters, and $d$ is the spatial dimension. We let $\mathcal{P}_{\Omega^\perp} = \mbox{Id} - \Omega \otimes \Omega$ be the projection matrix onto the plane orthogonal to $\Omega$.

This model resembles the usual isothermal gas dynamics equations. Eq.~(\ref{model08-1}) is the continuity equation expressing the conservation of mass. Eq.~(\ref{model08-2}) describes how the velocity orientation evolves under transport by the flow (the second term) and the pressure gradient (the third term, where $\Theta$ is related to the noise in the underlying IBM and has the interpretation of a temperature). However, there are important differences, which arise from the fact the $\Omega$ is not a true velocity but the velocity direction, i.e. it is a vector of unit norm (which is expressed by (\ref{model08-3})). To preserve this geometrical constraint, the pressure gradient has to be projected onto the normal to $\Omega$, which is the reason for the presence of $\mathcal{P}_{\Omega^\perp}$. Other differences stem from the allowed discrepancy between the two constants $c_1$ and $c_2$. While $c_1$ fixes the material velocity to $c_1 \Omega$, the constant $c_2$ describes how $\Omega$ is transported. This discrepancy originates from the lack of Galilean invariance of the underlying IBM, itself resulting from self-propulsion\cite{Toner_etal_PRL98}. This model has been extended into several directions\cite{Degond_Liu_M3AS12,Degond_etal_MAA13,Degond_etal_M3AS14,Degond_etal_arXiv:1404.4886,Degond_Yang_M3AS10} and a rigorous existence result is established in \cite{Degond_etal_MAA13}.

This paper is devoted to the study of the SOH model in an annular domain. Annular geometries allow for simple observations of symmetry-breaking transitions induced by collective motion. When a transition from disordered to collective motion occurs, the system is set into a collective rotation in either clockwise or counter-clockwise directions. Annular geometries are a traditional design for salmon cages in sea farms\cite{Fore_etal_Aquaculture09,Johansson_etal_PlosOne14} and for experiments with locusts\cite{Buhl_etal_Science06,Erban_Haskovec_KRM12}, pedestrians\cite{Moussaid_etal_PlosCB12} or sperm-cell dynamics\cite{Creppy_etal_brevet}. In all these examples, a polarized motion in one direction is observed. In the sperm-cell experiments, the observation of turbulent structures that superimpose to collective rotation motivates the present work. In pure semen, sperm-cells are mostly interacting through volume exclusion. But volume exclusion interactions of rod-like self-propelled particles result in alignment\cite{Peruani_etal_PRE06}. This legitimates the use of the Vicsek model\cite{Vicsek_etal_PRL95} and of its fluid counterpart, the SOH Model\cite{Degond_Motsch_M3AS08}, as models of collective sperm-cell dynamics. The Vicsek model in annular geometry has been shown to exhibit polarized motion in \cite{Czirok_Vicsek_PhysicaA00}. Here, we focus on the SOH model and study its normal modes in annular geometry in both the linear and nonlinear regimes.

We first study the linear modes of the SOH model around a perfectly polarized steady-state in Sec.~\ref{SecSS}. One of the main results of this paper is that these modes are pure imaginary (and thus, stable) and form a countable set. In Sec.~\ref{SecSSNum}, we compute the eigenmodes and eigenfunctions numerically and investigate how the eigenmodes depend on the geometry of the annulus and on the parameters of the model. We then turn towards the nonlinear model with the aims of (i) validating the linear analysis for small perturbations, (ii) investigating how the nonlinearity of the model affects the modal decomposition of the solution and (iii) demonstrating the capabilities of the modal decomposition to analyze the complex features of the nonlinear model. In future work, the modal decomposition will be used to calibrate the model coefficients against experimental data. We first develop the scheme in Sec.~\ref{Secrelax}  and then compare the results for the linear and nonlinear models in Sec.~\ref{SecNonlinearNum}. Finally we draw conclusions and perspecives in Sec.~\ref{sec:conclu}.

\section{Linear Modes of the SOH Model in Polar Coordinates}
\label{SecSS}
\subsection{The SOH model in polar coordinates and perfectly polarized steady-states}

Consider the SOH model (\ref{model08-1})-(\ref{model08-3}) in a two-dimensional annular domain ${\mathcal D} = \{x\in \mathbb{R}^2 \, \, | \, \,  |x| \in (R_1, R_2)\}$.
We introduce polar coordinates $(r,\theta) \in (R_1,R_2)\times[0,2\pi]$ where $r = |x|$ and $\theta$ is the angle between $x$ and a reference direction. We denote by $(e_r,e_\theta)$ the local basis associated to polar coordinates, i.e. $e_r = x/|x| = (\cos \theta, \sin \theta)$ and $e_\theta = e_r^\bot = (-\sin \theta, \cos \theta)$ where the exponent $\bot$ indicates a rotation by an angle $+ \pi/2$. Then, we let $\rho = \rho(r,\theta,t)$ and $\Omega = \Omega(r,\theta,t) = \cos \phi(r,\theta,t) \, e_r + \sin \phi(r, \theta,t) \, e_\theta$, where $\phi(r, \theta, t)$ represents the angle between $e_r$ and $\Omega$. We recall that the constants $c_1$, $c_2$ and $\Theta$ are such that $ c_1 >0, c_2 \in {\mathbb R}, \Theta >0$. For notational convenience, we introduce
\begin{equation} 
\alpha = \frac{c_2}{\Theta} , 
\label{eq:def_alpha}
\end{equation}
and we note that $\alpha$ is of the same sign as $c_2$ and that $c_2/\alpha = 1/\Theta >0$. 
After easy algebra, the SOH model (\ref{model08-1})-(\ref{model08-3}) is equivalent to the following system for $\rho(r,\theta,t)$ and $\phi(r,\theta,t)$ with $(r, \theta) \in (R_1,R_2)\times[0,2\pi]$ and $t>0$,
\begin{align}
&\partial_t \rho + \frac{c_1}{r}\left[\frac{\partial}{\partial r}(r\rho\cos\phi) 
+ \frac{\partial}{\partial\theta}(\rho\sin\phi)\right] = 0,\label{sys1}\\
&\rho\left[\partial_t\phi + c_2\left(\cos\phi\frac{\partial\phi}{\partial r}
+\frac{\sin\phi}{r}\frac{\partial\phi}{\partial\theta}+\frac{\sin\phi}{r}\right)\right] 
+ \Theta \left(\frac{\cos\phi}{r}\frac{\partial\rho}{\partial\theta}
-\sin\phi\frac{\partial\rho}{\partial r}\right) 
= 0,\label{sys2}
\end{align}
subject to the boundary conditions
\begin{equation}
\label{sysbdry}
\phi(R_1,\theta,t) = \phi(R_2,\theta,t) = \pm \frac{\pi}{2}, \qquad  \rho \text{ and } \phi \text{ periodic in }\theta.
\end{equation}
The first boundary condition (\ref{sysbdry}) imposes a tangential flow to the boundary $\partial {\mathcal D}$ and consequently ensures that there is no mass flow across this boundary.  

Now, we look for perfectly polarized steady states of the above system, i.e.  steady states of the form $(\rho_s, \phi_s)$ where $\rho_s$ is independent of $\theta$ and $\phi_s = -\pi/2$ in the whole domain (We have arbitrarily chosen a rotation in the clockwise direction but of course, the results would be the same, mutatis mutandis, with the opposite choice). We have the 

\begin{lemma}
The perfectly polarized steady-states form a one-para\-meter family of solutions given by 
$$\rho_s(r) = \rho_s^* \, r^{\alpha},\qquad 
\phi_s(r,\theta) = -\frac{\pi}{2},
$$
where $\alpha$ is given by (\ref{eq:def_alpha}) and $\rho_s^* >0$ is any positive constant.
\end{lemma}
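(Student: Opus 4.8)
The plan is to simply substitute the ansatz $\phi_s \equiv -\pi/2$ and $\rho_s = \rho_s(r)$ (independent of $\theta$ and $t$) into the system (\ref{sys1})--(\ref{sys2}) and read off the resulting ODE for $\rho_s$. First I would treat equation (\ref{sys1}): with $\cos\phi_s = 0$ and $\sin\phi_s = -1$, the radial flux term $\frac{\partial}{\partial r}(r\rho\cos\phi)$ vanishes identically, and the angular flux term $\frac{\partial}{\partial\theta}(\rho\sin\phi) = -\partial_\theta \rho_s$ vanishes because $\rho_s$ does not depend on $\theta$. Hence (\ref{sys1}) is automatically satisfied by any $\theta$-independent, time-independent $\rho_s(r)$, so it imposes no constraint.

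Next I would turn to equation (\ref{sys2}). The bracketed factor multiplying $\rho$ is $\partial_t\phi + c_2(\cos\phi\,\partial_r\phi + \frac{\sin\phi}{r}\partial_\theta\phi + \frac{\sin\phi}{r})$; since $\phi_s$ is constant the first three terms vanish, leaving $c_2 \frac{\sin\phi_s}{r} = -\frac{c_2}{r}$. The pressure term is $\Theta(\frac{\cos\phi_s}{r}\partial_\theta\rho_s - \sin\phi_s\,\partial_r\rho_s) = \Theta\,\rho_s'(r)$, using $\cos\phi_s = 0$ and $\sin\phi_s = -1$. Thus (\ref{sys2}) reduces to the scalar ODE
\begin{equation}
-\frac{c_2}{r}\,\rho_s(r) + \Theta\,\rho_s'(r) = 0,
\qquad\text{i.e.}\qquad
\frac{\rho_s'(r)}{\rho_s(r)} = \frac{c_2}{\Theta r} = \frac{\alpha}{r},
\label{}
\end{equation}
recalling the definition (\ref{eq:def_alpha}) of $\alpha$. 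Integrating this separable equation on $(R_1,R_2)$, where $\rho_s > 0$ so that dividing by $\rho_s$ is legitimate, gives $\ln\rho_s(r) = \alpha\ln r + \text{const}$, hence $\rho_s(r) = \rho_s^*\, r^\alpha$ for an arbitrary integration constant $\rho_s^* > 0$. Finally I would check that the boundary conditions (\ref{sysbdry}) are met: $\phi_s(R_1,\theta,t) = \phi_s(R_2,\theta,t) = -\pi/2$ is one of the two allowed tangential values, and $\rho_s, \phi_s$ are trivially periodic in $\theta$ since they do not depend on $\theta$; so every member of the family is indeed an admissible steady state, and conversely any perfectly polarized steady state must satisfy the ODE above and hence lie in this family.

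There is no real obstacle here — the argument is a direct computation. The only points requiring any care are: (i) correctly evaluating the several trigonometric terms at $\phi_s = -\pi/2$ (in particular not dropping the ``geometric'' source term $\frac{\sin\phi}{r}$ in (\ref{sys2}), which is precisely what forces the nontrivial $r^\alpha$ profile rather than a constant density), and (ii) noting that positivity of $\rho_s$ on the annulus is what licenses the division by $\rho$ in passing from (\ref{sys2}) to the ODE. Everything else is routine separation of variables.
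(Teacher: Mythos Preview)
Your proof is correct and follows exactly the same approach as the paper: substitute $\phi_s=-\pi/2$ into (\ref{sys2}) to obtain the ODE $-\frac{c_2}{r}\rho_s + \Theta\,\rho_s' = 0$ and integrate. You are actually more thorough than the paper, which omits the verification of (\ref{sys1}) and of the boundary conditions.
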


\begin{proof}
Inserting $\phi_s = -\frac{\pi}{2}$ into (\ref{sys2}) gives $-\frac{c_2}{r}\rho_s+ \Theta\frac{\partial\rho_s}{\partial r} = 0$.  Therefore, there exists $\rho_s^*>0$ such that $\rho_s(r) = \rho_s^* \, r^{\frac{c_2}{\Theta}} = \rho_s^* \, r^{\alpha}$ .
\end{proof}

\subsection{Linearization about perfectly polarized steady-states}
Next we study the linearization of (\ref{sys1}), (\ref{sys2}) about a perfectly polarized steady-state $(\rho_s, \phi_s)$. Given $\varepsilon>0$, a  linear perturbation $(\tilde \rho, \tilde \phi)$ is given by 
$$\rho = \rho_s + \varepsilon \tilde \rho + {\mathcal O}(\varepsilon^2), \qquad 
\phi= \phi_s + \varepsilon \tilde \phi + {\mathcal O}(\varepsilon^2).
$$
Expanding System (\ref{sys1}), (\ref{sys2}) about $(\rho_s, \phi_s)$ and dropping terms of order $\varepsilon^2$ or higher, we deduce that the 
system satisfied by $(\tilde \rho, \tilde \phi)$ is given by:
\begin{align}\label{epsilonsys}
\frac{\partial}{\partial t}\left(
\begin{array}{c}
\tilde \rho\\ \tilde \phi
\end{array}
\right)
+\mathcal{L}\left(
\begin{array}{c}
\tilde \rho\\ \tilde \phi
\end{array}
\right) = 0,
\end{align}
with
\begin{align}\label{drdth}
\mathcal{L}
= \left(
\begin{array}{cc}
0 & c_1\rho_s\\
\frac{c_2}{\alpha \rho_s} & 0
\end{array}
\right)\frac{\partial}{\partial r}
+\left(
\begin{array}{cc}
-\frac{c_1}{r} & 0 \\
0 & -\frac{c_2}{r}
\end{array}
\right)\frac{\partial}{\partial\theta}
+\left(\begin{array}{cc}
0 & \frac{c_1}{r}(1+\alpha)\rho_s\\
-\frac{c_2}{r\rho_s} & 0 
\end{array}
\right), 
\end{align}
supplemented with the boundary conditions:
\begin{eqnarray}
\label{eq:bc1}
\tilde \phi(R_1,\theta,t) = \tilde \phi(R_2,\theta,t) = 0, 
\qquad
\int_{R_1}^{R_2}\int_0^{2\pi} \tilde \rho r\,drd\theta = 0, 
\end{eqnarray}
and $\tilde \rho$, $\tilde \phi$ periodic in $\theta$. These bounday conditions are inherited from (\ref{sysbdry}). The second Eq. in (\ref{eq:bc1}) is a normalization condition whose physical significance is that we are perturbing the steady-state keeping the total particle mass in the system fixed. 

Looking for solutions $(\tilde \rho, \tilde \phi)$ in separation of variables form:
$$
\tilde \rho(r,\theta,t) = e^{\lambda t}\rho_{\lambda}(r,\theta),\qquad
\tilde \phi(r,\theta,t) = e^{\lambda t}\phi_{\lambda}(r,\theta),
$$
we deduce that $(\rho_\lambda, \phi_\lambda)$ must satisfy the following spectral problem: 
\begin{equation}
({\mathcal L} + \lambda \, {\mathcal I}) \left( \begin{array}{c} \rho_\lambda \\ \phi_\lambda \end{array} \right) = 0 , 
\label{eq:spectL}
\end{equation}
supplemented with the boundary conditions (\ref{eq:bc1}), where ${\mathcal I}$ is the identity matrix. We now consider the decomposition of  $(\rho_{\lambda}, \phi_{\lambda})$ into Fourier series, i.e. 
$$\rho_{\lambda}(r,\theta)=  \sum\limits_{n\in\mathbb{Z}}\hat{\rho}_n(r)e^{in\theta},\qquad
\phi_{\lambda}(r,\theta) = \sum\limits_{n\in\mathbb{Z}}\hat{\phi}_n(r)e^{in\theta},
$$where
\[
\hat{\rho}_n(r) = \frac{1}{2\pi}\int_{0}^{2\pi} \rho_{\lambda}(r,\theta)e^{-in\theta}\,d\theta,
\qquad \hat{\phi}_n(r) = \frac{1}{2\pi}\int_{0}^{2\pi} \phi_{\lambda}(r,\theta)e^{-in\theta}\,d\theta.
\]
Then, $(\hat{\rho}_n, \hat{\phi}_n)$ satisfies the following spectral problem: 
\begin{equation} 
({\mathcal L}_n + \lambda \, {\mathcal I}) \left( \begin{array}{c}\hat{\rho}_n  \\\hat{\phi}_n \end{array} \right) = 0 , 
\label{eq:spect}
\end{equation}
with 
\begin{equation}
{\mathcal L}_n 
= \left(\begin{array}{cc}
0 & c_1\rho_s\\
\frac{c_2}{\alpha \rho_s} & 0
\end{array}\right)
\frac{\partial}{\partial r} 
+
\left(\begin{array}{cc}
- in\frac{c_1}{r} & (1+\alpha)c_1\frac{\rho_s}{r} \\
-\frac{c_2}{r\rho_s} &-in\frac{c_2}{r}
\end{array}\right)
,
\label{eq:spect2}
\end{equation}
supplemented with the boundary conditions:
\begin{equation}
\label{eq:bcspect}
\hat{\phi}_n(R_1) = \hat{\phi}_n(R_2) = 0,
\qquad 
\int_{R_1}^{R_2} \hat \rho_n \, r \, dr = 0. 
\end{equation}
We first study the existence of non-trivial solutions $(\lambda, \hat \rho_n, \hat \phi_n)$ to this spectral problem. 

\subsection{Study of the spectral problem for ${\mathcal L}_n$}

We first prove the following 

\begin{lemma} \label{thm_imaginary}
All the eigenvalues $\lambda$ of (\ref{eq:spect}) are pure imaginary.
\end{lemma}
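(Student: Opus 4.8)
The plan is to equip the radial interval $(R_1,R_2)$ with a suitably weighted $L^2$ scalar product in which the operator $\mathcal{L}_n$ is conservative, in the sense that $\mathrm{Re}\,\langle\mathcal{L}_n U,U\rangle=0$ for every $U=(\hat\rho_n,\hat\phi_n)$ satisfying the boundary conditions (\ref{eq:bcspect}). Granting this, the lemma is immediate: if $(\lambda,U)$ with $U\neq 0$ solves (\ref{eq:spect}), then $\mathcal{L}_n U=-\lambda U$ gives $-\lambda\,\|U\|^2=\langle\mathcal{L}_n U,U\rangle$, whose real part vanishes, so $\mathrm{Re}(\lambda)\,\|U\|^2=0$ and hence $\mathrm{Re}\,\lambda=0$ since $\|U\|^2>0$.

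The first step is to rewrite the two scalar equations encoded in (\ref{eq:spect})--(\ref{eq:spect2}) in "divergence-like" form, using the two elementary identities $c_2=\Theta\alpha$ and $\rho_s'/\rho_s=\alpha/r$ that follow from $\alpha=c_2/\Theta$ and $\rho_s=\rho_s^*\,r^{\alpha}$. A short computation turns the system into
\[
\lambda\hat\rho_n-\frac{inc_1}{r}\hat\rho_n+\frac{c_1}{r}\,\partial_r\!\big(r\rho_s\hat\phi_n\big)=0,
\qquad
\lambda\hat\phi_n-\frac{inc_2}{r}\hat\phi_n+\Theta\,\partial_r\!\Big(\frac{\hat\rho_n}{\rho_s}\Big)=0 .
\]
I would then multiply the first equation by $\Theta\,\tfrac{r}{\rho_s}\,\overline{\hat\rho_n}$, the second by $c_1\,r\rho_s\,\overline{\hat\phi_n}$, add them, and integrate over $(R_1,R_2)$. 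The factors $\Theta$ and $c_1\rho_s^2$ are precisely those symmetrizing the $\partial_r$-part, and the common extra factor $r/\rho_s$ is what makes the boundary terms produced below harmless. The resulting identity has the shape $\lambda\,\|U\|^2+(\text{term }1)+(\text{term }2)=0$, where $\|U\|^2=\int_{R_1}^{R_2}\big(\Theta\tfrac{r}{\rho_s}|\hat\rho_n|^2+c_1 r\rho_s|\hat\phi_n|^2\big)\,dr>0$; term $1=-in\int_{R_1}^{R_2}\big(c_1\Theta\tfrac{|\hat\rho_n|^2}{\rho_s}+c_1c_2\rho_s|\hat\phi_n|^2\big)\,dr$ is purely imaginary because $n\in\mathbb{Z}$; and term $2$ is the cross contribution $c_1\Theta\int_{R_1}^{R_2}\big(\tfrac{\overline{\hat\rho_n}}{\rho_s}\,\partial_r(r\rho_s\hat\phi_n)+r\rho_s\,\overline{\hat\phi_n}\,\partial_r(\hat\rho_n/\rho_s)\big)\,dr$.

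It remains to analyze term $2$. Integrating by parts the first summand produces the boundary term $c_1\Theta\big[\,r\,\overline{\hat\rho_n}\hat\phi_n\,\big]_{R_1}^{R_2}$, which vanishes by the Dirichlet condition $\hat\phi_n(R_1)=\hat\phi_n(R_2)=0$; writing $\psi:=\partial_r(\hat\rho_n/\rho_s)$ and using that $\rho_s$ is real, the remaining integral equals $c_1\Theta\int_{R_1}^{R_2} r\rho_s\big(\psi\,\overline{\hat\phi_n}-\overline{\psi}\,\hat\phi_n\big)\,dr=2ic_1\Theta\int_{R_1}^{R_2} r\rho_s\,\mathrm{Im}(\psi\,\overline{\hat\phi_n})\,dr$, which is purely imaginary. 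Taking real parts of the integrated identity then yields $\mathrm{Re}(\lambda)\,\|U\|^2=0$, hence $\mathrm{Re}\,\lambda=0$. (Note that the normalization $\int_{R_1}^{R_2}\hat\rho_n\,r\,dr=0$ is not needed for this argument.) I expect the only genuinely non-routine point to be the identification of the correct weights: the $\partial_r$-part of $\mathcal{L}_n$ only fixes the ratio $\Theta:c_1\rho_s^2$, and one has to notice that the residual $r$-dependent freedom must be chosen as $r/\rho_s$ so that, after the divergence-form rewriting and one integration by parts, the leftover zeroth-order terms become exactly a boundary term that the boundary condition on $\hat\phi_n$ annihilates; everything else (the rewriting, the integration by parts, the bookkeeping of real versus imaginary parts) is computational.
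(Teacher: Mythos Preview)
Your proof is correct, and it takes a genuinely different route from the paper's. The paper first applies the change of unknowns $\rho_n=r^{-\alpha}\hat\rho_n$, $\phi_n=\rho_s^*r^{\alpha+1}\hat\phi_n$, then assumes $\mu:=\mathrm{Re}\,\lambda\neq 0$, eliminates $\rho_n$ to obtain a second-order ODE for $\phi_n$, multiplies by $\bar\phi_n$, integrates, and extracts the real part to force $\phi_n\equiv 0$, a contradiction. Your argument stays at the level of the first-order system and exhibits directly a weighted $L^2$ inner product in which $\mathcal{L}_n$ is skew-symmetric on functions satisfying the Dirichlet condition on $\hat\phi_n$; the conclusion $\mathrm{Re}\,\lambda=0$ then follows in one line, with no proof by contradiction and no division by a quantity that must first be checked nonzero. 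A nice byproduct of your approach is that the weights you discover, $\Theta\,r/\rho_s$ and $c_1 r\rho_s$, coincide (up to an overall positive constant, using $\rho_s=\rho_s^*r^{\alpha}$) with the weights $\tfrac{\Theta}{c_1}r^{-(\alpha-1)}$ and $(\rho_s^*)^2 r^{\alpha+1}$ of the inner product (\ref{eq:innerprod1}) that the paper introduces only \emph{after} proving Theorem~\ref{thm_basis}; so your computation explains in advance why that particular inner product is the natural one. Conversely, the paper's second-order reduction is not wasted effort: it is reused verbatim in the proof of Theorem~\ref{thm_basis} to set up the variational formulation (\ref{vtildeh})--(\ref{varv}), so their route front-loads machinery needed later.
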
   

\begin{proof} 
We recall that $\rho_s(r) = \rho_s^* \, r^{\alpha}$. We introduce the transformation:
\begin{equation}
\rho_n(r) = r^{-\alpha}\hat{\rho}_n(r), 
\qquad \phi_n(r) = \rho_s^* \, r^{\alpha+1}\hat{\phi}_n(r),
\label{eq:rhonphin}
\end{equation}
and find
\begin{eqnarray}
&& \displaystyle\frac{\partial\rho_n}{\partial r} 
\displaystyle + \frac{\alpha}{r^{\alpha+1}}\left(\frac{\lambda}{c_2}-\frac{in}{r}\right)\phi_n = 0,\label{rnsysrho}\\
&&\displaystyle\frac{\partial\phi_n}{\partial r} 
\displaystyle+ \left(\frac{\lambda}{c_1}-\frac{in}{r}\right)r^{\alpha+1}\rho_n = 0, \label{rnsysphi}
\end{eqnarray}
subject to the boundary conditions:
\begin{equation}
\phi_n(R_1) = \phi_n(R_2) = 0, \qquad \int_{R_1}^{R_1} \rho \, r^{\alpha + 1} \, dr = 0. 
\label{eq:Fourier_bc}
\end{equation}

Let $\lambda = \mu+i\nu$ where $\mu$ denotes the real part of $\lambda$ and $\nu$ its imaginary part. Assume that $\mu \neq 0$. We divide (\ref{rnsysphi}) by $\left(\frac{\lambda}{c_1}-\frac{in}{r}\right)r^{\alpha+1} = \left(\frac{\mu}{c_1} + i \big( \frac{\nu}{c_1}-\frac{n}{r} \big) \right)r^{\alpha+1} \not = 0$ and use the first equation (\ref{rnsysrho}) to get (remembering (\ref{eq:def_alpha})):
\begin{equation}\label{phin2nd}
\frac{\partial}{\partial r}\left[\frac{1}{r^{\alpha+1} \Big( \frac{\mu}{c_1}
+i\left(\frac{\nu}{c_1}-\frac{n}{r}\right) \Big) }\frac{\partial\phi_n}{\partial r}\right]
-\frac{1}{r^{\alpha+1}}\left[\frac{\mu}{\Theta}+i\left(\frac{\nu}{\Theta}-\frac{n \alpha}{r}\right)\right]
\phi_n = 0.
\end{equation}
Multiplying (\ref{phin2nd}) by $\bar \phi_n$ (the complex conjugate of $\phi_n$), integrating with respect to~$r$, using the boundary conditions (\ref{eq:Fourier_bc}) and taking the real part of the so-obtained expression, we get
$$ \int_{R_1}^{R_2} \frac{ \frac{\mu}{c_1} } {r^{\alpha+1} \Big( \big( \frac{\mu}{c_1} \big)^2 + \big( \frac{\nu}{c_1} - \frac{n}{r} \big)^2 \Big) } \, \Big| \frac{\partial \phi_n}{\partial r}(r) \Big|^2 \, dr + \int_{R_1}^{R_2} \frac{1}{r^{\alpha+1}} \, \frac{\mu}{\Theta} \, |\phi_n (r)|^2 \, dr = 0 . $$
Since $\Theta >0$ and $\mu \not = 0$, we have $\phi_n = 0$, which shows that there cannot exist a non-trivial solution of the spectral problem when $\mu \not = 0$. \end{proof}

We now determine the eigenvalues $\lambda = i \nu$, $\nu \in {\mathbb R}$ of (\ref{eq:spect}). 
Dropping the index $n$ for simplicity, we introduce the following transformation:
\begin{equation}
u = \sqrt{\frac{c_2}{c_1\alpha}}r^{\frac{\alpha+1}{2}}\rho, \qquad v = \frac{i}{r^{\frac{\alpha+1}{2}}}\phi.
\label{eq:defuv}
\end{equation}
From (\ref{rnsysrho}), (\ref{rnsysphi}), (\ref{eq:Fourier_bc}), $(u,v)$ satisfies the spectral problem:
\begin{equation}
({\mathcal A}_n + \nu {\mathcal I}) \left(\begin{array}{c} u \\ v \end{array}\right) = 0, 
\label{eq:spectAn}
\end{equation}
with the operator $\mathcal{A}_n$ acting on $(u,v) \in L^2(R_1, R_2)^2$  defined by
\[
\mathcal{A}_n
\left(\begin{array}{c}
u\\
v
\end{array}\right) 
= \left(\begin{array}{c}
\displaystyle\frac{c_1n}{r}u 
- \sqrt{\frac{c_1c_2}{\alpha}}\frac{1}{r^{\frac{\alpha+1}{2}}}\frac{\partial}{\partial r}\left(r^{\frac{\alpha+1}{2}}v\right)\\ 
\displaystyle\sqrt{\frac{c_1c_2}{\alpha}}r^{\frac{\alpha+1}{2}}\frac{\partial}{\partial r}\left(\frac{u}{r^{\frac{\alpha+1}{2}}}\right)
+\frac{c_2n}{r}v
\end{array}\right), 
\]
with domain
\[
D(\mathcal{A}_n) = \Big\{
(u,v) \in H^1(R_1, R_2)^2, \quad v(R_1) = v(R_2) = 0\Big\}.
\]
Note that $\nu\in\mathbb{R}$ and that ${\mathcal A}_n$ has real-valued coefficients. Without loss of generality, we look for real-valued eigenfunctions $(u,v)$. For this operator, we have the

\begin{theorem}\label{thm_basis}
The spectrum of $\mathcal{A}_n$ consists of a countable set of eigenvalues $(\nu_{nm})_{m \in {\mathbb N}}$, $\nu_{nm} \in {\mathbb R}$ associated to a complete orthonormal system of $L^2(R_1, R_2)^2$ of eigenfunctions $(u_{nm},v_{nm})_{m \in {\mathbb N}}$. 
Furthermore, $|\nu_{nm}| \to \infty$ as $m \to \infty$. 
\end{theorem}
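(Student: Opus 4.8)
The plan is to exhibit $\mathcal{A}_n$ as an unbounded self-adjoint operator on the Hilbert space $L^2(R_1,R_2)^2$ having compact resolvent, and then to invoke the spectral theorem for such operators, which simultaneously furnishes a discrete real spectrum, a Hilbert basis of eigenfunctions, and the accumulation $|\nu_{nm}|\to\infty$. For brevity set $\beta:=\sqrt{c_1 c_2/\alpha}$ and $k:=(\alpha+1)/2$; since $c_1>0$ and $c_2/\alpha>0$ we have $\beta\in(0,\infty)$, and since $0<R_1<R_2<\infty$ the weights $r^{\pm k}$ and the multipliers $c_1 n/r$, $c_2 n/r$ are smooth and bounded on $[R_1,R_2]$, with bounded derivatives and bounded inverses. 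Note also that $D:=D(\mathcal{A}_n)$ is dense in $L^2(R_1,R_2)^2$, since it contains $C_c^\infty(R_1,R_2)^2$.

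\emph{Step 1 (self-adjointness).} Split off the zeroth-order part, $\mathcal{A}_n=\mathcal{A}_n^{(0)}+K_n$, where $K_n=\mathrm{diag}(c_1 n/r,\,c_2 n/r)$ is a bounded self-adjoint multiplication operator and
\[
\mathcal{A}_n^{(0)}\binom{u}{v}=\binom{-\beta\,r^{-k}\partial_r(r^{k}v)}{\ \ \ \beta\,r^{k}\partial_r(r^{-k}u)},\qquad D(\mathcal{A}_n^{(0)})=D.
\]
By the Kato--Rellich theorem it is enough to show that $\mathcal{A}_n^{(0)}$ is self-adjoint on $D$. Symmetry follows from two integrations by parts: for $(u_i,v_i)\in D$ the only boundary terms generated are $-\beta[v_1\bar u_2]_{R_1}^{R_2}$ and $\beta[u_1\bar v_2]_{R_1}^{R_2}$, and both vanish because $v_1$ and $v_2$ vanish at $R_1$ and $R_2$. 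To pin down the adjoint, take $(u^*,v^*)\in D((\mathcal{A}_n^{(0)})^*)$; performing the same manipulations but now retaining all boundary terms, the requirement that $(u,v)\mapsto\langle\mathcal{A}_n^{(0)}(u,v),(u^*,v^*)\rangle$ extend to a bounded functional on $L^2(R_1,R_2)^2$ splits into two parts. Testing against arbitrary $v\in H^1$ with $v(R_i)=0$ forces $u^*\in H^1$ with $\beta r^k\partial_r(r^{-k}u^*)\in L^2$, and no residual boundary term appears. Testing against arbitrary $u\in H^1$ \emph{without} constraint at the endpoints forces $v^*\in H^1$ with $-\beta r^{-k}\partial_r(r^k v^*)\in L^2$, and in addition leaves the boundary term $\beta[u\bar v^*]_{R_1}^{R_2}$, which can be absorbed for \emph{every} such $u$ only if $v^*(R_1)=v^*(R_2)=0$. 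Hence $D((\mathcal{A}_n^{(0)})^*)=D$ and $(\mathcal{A}_n^{(0)})^*=\mathcal{A}_n^{(0)}$, so $\mathcal{A}_n=\mathcal{A}_n^{(0)}+K_n$ is self-adjoint on $D$. Conceptually, a $2\times2$ first-order system on a bounded interval calls for one boundary condition at each endpoint to define a self-adjoint realization, and the computation shows that the two Dirichlet conditions carried by $v$ alone are exactly such a choice (one condition fewer leaves a nonvanishing boundary form, one more is overdetermined).

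\emph{Step 2 (compact resolvent and conclusion).} If $\mathcal{A}_n(u,v)=(f,g)$ then $\partial_r(r^k v)=\beta^{-1}r^k(c_1 n r^{-1}u-f)$ and $\partial_r(r^{-k}u)=\beta^{-1}r^{-k}(g-c_2 n r^{-1}v)$ both lie in $L^2$; combined with $u,v\in L^2$ and the boundedness of $r^{\pm k}$ and their derivatives, this gives $\|u\|_{H^1}+\|v\|_{H^1}\lesssim\|(u,v)\|_{L^2}+\|(f,g)\|_{L^2}$. Thus the graph norm of $\mathcal{A}_n$ on $D$ is equivalent to the $H^1(R_1,R_2)^2$ norm, so by the Rellich compactness of the embedding $H^1(R_1,R_2)\hookrightarrow L^2(R_1,R_2)$ on a bounded interval, $D$ embeds compactly into $L^2(R_1,R_2)^2$; equivalently $(\mathcal{A}_n-i)^{-1}$ is a compact operator on $L^2(R_1,R_2)^2$ (and $i$ belongs to the resolvent set since $\mathcal{A}_n$ is self-adjoint). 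The spectral theorem for self-adjoint operators with compact resolvent then yields a sequence $(\nu_{nm})_{m\in\mathbb{N}}\subset\mathbb{R}$ of eigenvalues of finite multiplicity together with an orthonormal basis $(u_{nm},v_{nm})_{m\in\mathbb{N}}$ of $L^2(R_1,R_2)^2$ of eigenfunctions, which may be taken real-valued since $\mathcal{A}_n$ has real coefficients and each $\nu_{nm}\in\mathbb{R}$. Finally $\mathcal{A}_n$ is unbounded (it contains differentiation, which is unbounded on $L^2$ of an interval), so the eigenvalue sequence is infinite and can accumulate only at $\pm\infty$, i.e.\ $|\nu_{nm}|\to\infty$ as $m\to\infty$.

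\emph{Expected main obstacle.} The delicate point is Step 1, the exact identification of $D((\mathcal{A}_n^{(0)})^*)$: one must keep every boundary term in the integrations by parts and exploit that $u$ is tested over an \emph{unconstrained} dense set of $H^1$ functions, in order to produce precisely the condition $v^*(R_i)=0$ while producing \emph{no} spurious condition on $u^*$. (Formally one could instead eliminate $u$ and derive a second-order equation for $r^k v$, but this is awkward here because the eigenvalue $\nu$ then enters the coefficients nonlinearly, so it is not a classical Sturm--Liouville eigenvalue problem.) The weight factors $r^{\pm k}$ are merely a cosmetic nuisance: on the compact interval $[R_1,R_2]\subset(0,\infty)$ they are smooth and invertible, hence irrelevant to all the functional-analytic estimates above.
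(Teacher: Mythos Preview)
Your proof is correct, and it takes a genuinely different route from the paper's. You work directly with the first-order system: you split off the bounded diagonal multiplication $K_n$, verify by an explicit boundary-form computation that $\mathcal{A}_n^{(0)}$ is self-adjoint on $D$ (the key point being that the two Dirichlet conditions on $v$ are precisely what kill the boundary form while imposing nothing on $u$), and then obtain compact resolvent from the equivalence of the graph norm with the $H^1$ norm together with Rellich compactness on the bounded interval.

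The paper instead eliminates $u$ to obtain a second-order scalar equation for $v$ with coefficients depending on the spectral parameter (exactly the reduction you flag as awkward in your final paragraph). To cope with the fact that this is not a standard Sturm--Liouville problem, the paper introduces a real shift $\eta$, applies Lax--Milgram with a further shift $\sigma$ to build a compact self-adjoint operator $\mathcal{T}_\sigma^\eta$, invokes the Fredholm alternative, and then needs a separate analyticity-in-$\eta$ argument (differentiating the quadratic form at $\eta=0$) to rule out the possibility that $\mathrm{Ker}(\mathcal{I}-\sigma\mathcal{T}_\sigma^\eta)$ is nontrivial for \emph{every} small $\eta$. Only then does it conclude that some $(\mathcal{A}_n+\eta_n\mathcal{I})^{-1}$ exists and is compact; self-adjointness of this resolvent is asserted rather than argued in detail.

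Your approach is shorter and more robust: it sidesteps the nonlinear dependence of the reduced problem on the eigenvalue, it makes self-adjointness explicit rather than implicit, and it requires no perturbation/analyticity argument to locate a point in the resolvent set (self-adjointness gives $\pm i$ for free). The paper's approach, on the other hand, stays closer to scalar elliptic theory and may feel more familiar to readers used to variational formulations, at the cost of the extra machinery.
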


\begin{proof} 
We first assume that $n \not = 0$ and
drop the subindex $n$ of $\mathcal{A}_n$ for simplicity. 
We will show that there exists $\eta \in \mathbb{R}$ such that the resolvant $R_{\eta} = (\mathcal{A}+\eta\mathcal{I})^{-1}$ exists and is compact in $L^2(R_1, R_2)^2$. 
For this purpose, we consider $(h,g)\in L^2(R_1, R_2)^2$ and look for a solution $(u_\eta,v_\eta)\in D(\mathcal{A})$ of $(\mathcal{A} + \eta {\mathcal I}) (u_\eta,v_\eta) = (h,g)$, i.e.,
\begin{eqnarray}
&&\displaystyle \Big( 1 + \frac{\eta r}{c_1 n} \Big) \frac{c_1n}{r} u_\eta 
- \sqrt{\frac{c_1c_2}{\alpha}}\frac{1}{r^{\frac{\alpha+1}{2}}}\frac{\partial}{\partial r}\left(r^{\frac{\alpha+1}{2}}v_\eta\right) = h,\label{uva}\\ 
&&\displaystyle\sqrt{\frac{c_1c_2}{\alpha}}r^{\frac{\alpha+1}{2}}\frac{\partial}{\partial r}\left(\frac{u_\eta}{r^{\frac{\alpha+1}{2}}}\right)
+\Big( 1 + \frac{\eta r}{c_2 n} \Big) \frac{c_2n}{r}v  =g.\label{uvb}
\end{eqnarray}
We take $|\eta| < \eta_0:= \frac{c_1 |n|}{R_1}$ in such a way that $1 + \frac{\eta r}{c_1n} >0$, $\forall r \in [R_1,R_2]$. Multiplying (\ref{uva}) by $( 1 + \frac{\eta r}{c_1 n} )^{-1} \, r^{-\frac{\alpha-1}{2}}$, taking its derivative with respect to $r$ and using (\ref{uvb}), we deduce that $v_\eta$ satisfies:
\begin{equation}\label{vtildeh}
-\frac{\partial}{\partial r}\left[\frac{1}{r^{\alpha}\big(1+\frac{\eta r}{c_1n}\big)}\frac{\partial}{\partial r}
\left(r^{\frac{\alpha+1}{2}}v_{\eta}\right)\right] 
-\frac{\alpha n^2}{r^{\frac{\alpha+3}{2}}} \, \big(1+\frac{\eta r}{c_2 n} \big) \, v_{\eta}
=\tilde{h}_\eta, 
\end{equation}
with the boundary conditions $v_\eta(R_1) = v_\eta(R_2) = 0$,
where 
\[
\tilde{h}_\eta = \sqrt{\frac{\alpha}{c_1c_2}}\frac{\partial}{\partial r} 
\left(\frac{h}{r^{\frac{\alpha-1}{2}} \big(1 + \frac{\eta r}{c_1n} \big) }\right)
-\frac{\alpha n}{c_2 \, r^{\frac{\alpha+1}{2}}} \, g. 
\]
Note that $\tilde{h}_\eta \in H^{-1}(R_1, R_2)$. Now the problem consists of showing the existence of a unique weak solution $v_\eta \in H^1_0(R_1, R_2)$ to (\ref{vtildeh}), i.e. to the variational formulation:
\begin{equation}\label{varv}
a^\eta (v_\eta, \tilde v) = \langle \tilde{h}_\eta , r^{\frac{\alpha+1}{2}} \, \tilde v \rangle_{H^{-1},H^1_0}\, \, , \qquad \forall \tilde v \in H^1_0(R_1,R_2),
\end{equation}
with 
\begin{eqnarray}
\label{varveta}
&&a^{\eta}(v_{\eta}, \tilde v) =
\int_{R_1}^{R_2}\frac{1}{r^{\alpha} \big(1 + \frac{\eta r}{c_1n} \big)}
\frac{\partial}{\partial r}\left(r^{\frac{\alpha+1}{2}}v_{\eta}\right)
\frac{\partial}{\partial r}\left(r^{\frac{\alpha+1}{2}}\tilde{v}\right)\,dr\\
&&\hspace{1.8cm} -\int_{R_1}^{R_2}\frac{\alpha n^2}{r} \big( 1 + \frac{\eta r}{c_2n} \big) \, v_{\eta}\tilde{v}\,dr,\nonumber
\end{eqnarray}
and where the brackets at the right-hand side of (\ref{varv}) denote duality between the distribution $\tilde h_\eta \in H^{-1}$ and the function $r^{\frac{\alpha+1}{2}} \, \tilde v \in H^1_0$. We introduce $\sigma > 0$. Choosing $\sigma$ large enough and $|\eta|<\eta_0$ there exists $C, \, C'>0$ such that 
$$\sigma - \frac{\alpha n^2}{r^{\frac{\alpha+3}{2}}} \big( 1 + \frac{\eta r}{c_2n} \big) \geq C >0, \quad \frac{1}{r^{\alpha} \big(1 + \frac{\eta r}{c_1n} \big)} > C' >0, \quad \forall r \in [R_1,R_2]. $$
Therefore, the bilinear form $a_{\sigma}^{\eta}(v,\tilde v) = a^\eta(v,\tilde v) + (v,\tilde v \, r^{\frac{\alpha + 1}{2}})$ where $(\cdot,\cdot)$ is the usual inner product in $L^2(R_1,R_2)$ is coercive on $H^1_0(R_1,R_2)$. Consequently, by the Lax-Milgram theorem, the variational formulation 
$$
a^\eta_\sigma (v, \tilde v) = \langle \ell, r^{\frac{\alpha+1}{2}} \, \tilde v \rangle_{H^{-1},H^1_0}\, \, , \qquad \forall \tilde v \in H^1_0(R_1,R_2),
$$
has a unique solution for any $\ell \in H^{-1}(R_1,R_2)$ and the dependence of $v$ upon $\ell$ is continuous. This defines a continuous linear mapping $\mathcal{T}_{\sigma}^\eta: H^{-1}(R_1,R_2) \to H_0^1(R_1, R_2)$, $\ell \mapsto v$. Then $v_\eta\in H^1_0(R_1,R_2)$ is a solution of (\ref{varv}) if and only if $ v_\eta = \mathcal{T}_{\sigma}^\eta (\tilde{h}+\sigma v_\eta )$, i.e.
\begin{align}\label{fixed}
(\mathcal{I}-\sigma\mathcal{T}_{\sigma}^\eta)v_\eta = \mathcal{T}_{\sigma}^\eta\tilde{h}_\eta.
\end{align}
We note that $\mathcal{T}_{\sigma}^\eta\tilde{h}_\eta \in H_0^1(R_1, R_2) \subset L^2(R_1, R_2)$ and that $T_{\sigma}^\eta$, restricted to $L^2(R_1,R_2)$, is a bounded operator from $L^2(R_1, R_2)$ to $H_0^1(R_1, R_2)$. After composition with the canonical imbedding of $H_0^1(R_1, R_2)$ into $L^2(R_1, R_2)$ (still denoted by $T_{\sigma}^\eta$), $T_{\sigma}^\eta$ is a compact operator on $L^2(R_1, R_2)$. Therefore $\mathcal{I} - \sigma\mathcal{T}_{\sigma}^\eta$ is a Fredholm operator. In addition, $\mathcal{T}_{\sigma}^\eta$ is self-adjoint because the bilinear form $a_{\sigma}^\eta$ is symmetric. Thanks to the Fredholm alternative, we have
\[
\text{Im}(\mathcal{I}-\sigma\mathcal{T}_{\sigma}^\eta) = \text{Ker}(\mathcal{I}-\sigma \mathcal{T}_{\sigma}^\eta)^{\perp},
\]
where Im denotes the range and Ker denotes the null space of an operator. 

Suppose that there exists $\eta$ such that $|\eta|<\eta_0$ and  Ker$(\mathcal{I}-\sigma\mathcal{T}_{\sigma}^\eta) = \{0\}$. Then, $(\mathcal{I}-\sigma\mathcal{T}_{\sigma}^\eta)$ is invertible and there exists a unique solution $v_\eta \in H_0^1(R_1, R_2)$ to (\ref{fixed}), or equivalently, to (\ref{vtildeh}). Defining $u_\eta$ by (\ref{uva}) (remember that we suppose $n \not = 0$), then $u_\eta \in L^2(R_1, R_2)$ since $v_\eta \in H^1(R_1, R_2)$ and $h \in L^2(R_1, R_2)$. But, since $v_\eta$ satisfies  (\ref{vtildeh}) in the distributional sense, $u_\eta$ satisfies (\ref{uvb}) in the distributional sense. From the facts that $v_\eta$ and $g$ both belong to $L^2(R_1, R_2)$, we get that $u_\eta \in H^1(R_1, R_2)$. Therefore, $(u_\eta,v_\eta) \in D({\mathcal A})$ and by (\ref{uva}), (\ref{uvb}), it satisfies $({\mathcal A}+\eta {\mathcal I}) (u_\eta,v_\eta) = (h,g)$. This shows that ${\mathcal A}+\eta {\mathcal I}$ is invertible. Furthermore, since $D({\mathcal A})$ is compactly imbedded into $L^2(R_1,R_2)^2$ and that $R_\eta = ({\mathcal A}+\eta {\mathcal I})^{-1}$ is a continuous linear map from $L^2(R_1,R_2)^2$ into $D({\mathcal A})$, the map  $R_\eta$ is compact as an operator of $L^2(R_1,R_2)^2$. 

To prove that there exists $\eta$ such that $|\eta|<\eta_0$ and Ker$(\mathcal{I}-\sigma\mathcal{T}_{\sigma}^\eta) = \{0\}$, we proceed by contradiction. We suppose that for all such $\eta$ there exists a non-trivial $v_\eta \in H_1^0(R_1,R_2)$ such that $(\mathcal{I}-\sigma\mathcal{T}_{\sigma}^{\eta})v_{\eta} = 0$. Equivalently, Eq.~(\ref{varv}) with right-hand side $\tilde{h}_\eta = 0$ has a non-trivial solution $v_\eta \in H_0^1(R_1,R_2)$, which means that $v_\eta$ is an eigenvector for the eigenvalue $0$ of the variational spectral problem: `` to find $\lambda \in {\mathbb R}$ and $v^\lambda \in H^1_0(R_1,R_2)$, $v^\lambda \not = 0$, such that 
$$
a^\eta (v^\lambda, \tilde v) = \lambda (v^\lambda, \tilde v \, r^{\frac{\alpha + 1}{2}}), \quad \forall \tilde v  \in H^1_0(R_1,R_2). \quad ''
$$ 
From the classical spectral theory of elliptic operators\cite{Brezis_Springer11}, we know that the eigenvalues of this problem are isolated. Furthermore, $0$ is a simple eigenvalue. Indeed, Eq.~(\ref{vtildeh}) is a linear second order differential equation. For a given $\eta$, consider two solutions $v_1$, $v_2$ in $H_1^0(R_1,R_2)$ of (\ref{vtildeh}) associated to $\tilde h_\eta = 0$. The Wronskian $v_1 \partial_r v_2 - v_2 \partial_r v_1$ is zero because both $v_1$ and $v_2$ vanish at the boundaries. Therefore, $v_1$ and $v_2$ are linearly dependent and consequently the dimension of the associated eigenvectors is $1$. We realize that the coefficients of $a^\eta$ given by (\ref{varveta}) are analytic functions of $\eta \in [-\eta_0,\eta_0]$. Then, from classical spectral theory again\cite{Kato_Springer13}, one can define an analytic branch of non-zero solutions $\eta \in [-\eta_0,\eta_0] \to v_\eta \in H_1^0(R_1,R_2)$. Now, from (\ref{varv}) with right-hand side $\tilde{h}_\eta = 0$, it follows that for such $v_\eta$, we have $a^\eta(v_\eta, v_\eta) = 0$. Taking the derivative of this identity with respect to $\eta$ at $\eta = 0$, and using the fact that $a^\eta$ is a symmetric bilinear form, we get:
\begin{eqnarray*}
\Big( \frac{d a^\eta}{d\eta}\Big|_{\eta = 0} \Big) (v_0, v_0) +  2 \, a^0 \Big(v_0, \frac{d v_\eta}{d\eta}\Big|_{\eta = 0} \Big) = 0. 
\end{eqnarray*}
Now, since $v_0$ is a variational solution of (\ref{vtildeh}) for $\eta = 0$ with zero right-hand side, the second term is identically zero. Computing the first term, we get 
\begin{eqnarray*}
- n \bigg( \frac{c_1}{n^2} \int_{R_1}^{R_2} \frac{1}{r^{\alpha-1}} \Big| \frac{\partial}{\partial r} \big( r^{\frac{\alpha + 1}{2}} v_0 \big) \Big|^2 \, dr + \frac{1}{\Theta} \int_{R_1}^{R_2} |v_0|^2
\, dr \bigg) = 0 
\end{eqnarray*}
The quantity inside the parentheses is a nonegative quantity which can only be $0$ if $v_0$ is identically zero, which contridicts the hypothesis that $v_0$ is a non-trivial solution. This shows the contradiction and proves that there exists $\eta\in\mathbb{R}$ small enough such that $\text{Ker}(\mathcal{I} - \sigma\mathcal{T}_{\sigma}^{\eta}) = \{0\}$.

In the case $n=0$, it is an easy matter to see that the above proof can be reproduced or alternately, one can invoke directly the spectral theory of elliptic operators. Details are left to the reader. 

Now, for all $n \in {\mathbb Z}$, there exists $\eta_n \in {\mathbb R}$ such that  $R_n = ({\mathcal A}_n + \eta_n {\mathcal I})^{-1} $ exists, and is a compact self-adjoint operator of $L^2(R_1,R_2)^2$. By the spectral theorem for compact self-adjoint operators, there exists a Hilbert basis $(u_{nm},v_{nm})_{m \geq 0}$ of $L^2(R_1,R_2)^2$ and a sequence $(\tau_{nm})_{m \geq 0}$  of real numbers such that $\tau_{nm} \to 0$ as $m \to \infty$ and such that $(u_{nm},v_{nm})$ is an eigenfunction of $R_n$ associated to the eigenvalue $\tau_{nm}$. Then, $(u_{nm},v_{nm})_{m \geq 0}$ is a Hilbert basis in $L^2(R_1,R_2)^2$ of eigenfunctions of ${\mathcal A}_n$ associated to the sequence of eigenvalues $(\nu_{nm})_{m \geq 0}$ with $\nu_{nm} = \frac{1}{\tau_{nm}} - \eta_n$. We have $|\nu_{nm}| \to \infty$ as $m \to \infty$, which concludes the  proof. \end{proof}

We now come back to the original spectral problem (\ref{eq:spectL}). We define 
\begin{equation} \hat \rho_{nm} = \sqrt{\frac{c_1 \alpha}{c_2}} \, r^{\frac{\alpha-1}{2}} \, u_{nm}, \qquad \hat \psi_{nm} = \frac{1}{\rho_s^*} \, r^{\frac{-(\alpha+1)}{2}} \, v_{nm} ,  
\label{eq:eigenL}
\end{equation}
where $(u_{nm},v_{nm})_{m \geq 0}$ is the Hilbert basis of eigenfunctions of ${\mathcal A}_n$ found at Theorem \ref{thm_basis}, and $(\nu_{mn})_{m \geq 0}$ is the associated sequence of eigenvalues. Thanks to the change of functions (\ref{eq:rhonphin}), (\ref{eq:defuv}), $(\hat \rho_{nm}, i\hat \psi_{nm})$ is a Hilbert basis of eigenfunctions of ${\mathcal L}_n$ and $(i\nu_{mn})_{m \geq 0}$ is the associated sequence of eigenvalues. The system $(\hat \rho_{nm}, i\hat \psi_{nm})$ is orthonormal for the inner product 
\begin{equation}
\left\langle \left( \begin{array}{c} \rho \\ \phi \end{array} \right) , \left( \begin{array}{c} \tilde \rho \\ \tilde \phi \end{array} \right) \right\rangle = \int_{R_1}^{R_2} \Big( \frac{ \Theta}{c_1} \, r^{-(\alpha - 1)} \, \rho(r) \, \overline{\tilde \rho(r)}  +  (\rho_s^*)^2 \, r^{\alpha + 1} \, \phi(r) \, \overline{\tilde \phi(r)} \Big) \, dr.
\label{eq:innerprod1}
\end{equation}
Furthermore, we have the following easy Lemma (whose proof is left to the reader): 
\begin{lemma}
Let $i \nu$ be an eigenvalue of ${\mathcal L}_n$ associated to the eigenvector $(\hat \rho, i\hat \psi)$, then $-i\nu$ is an eigenvalue of ${\mathcal L}_{-n}$ associated to the eigenvector $(\hat \rho, - i\hat \psi)$.
\label{lem:parity}
\end{lemma}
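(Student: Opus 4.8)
The plan is to exploit the fact that $n$ enters the operator ${\mathcal L}_n$ in (\ref{eq:spect2}) only through purely imaginary terms that are linear in $n$, so that componentwise complex conjugation intertwines ${\mathcal L}_n$ and ${\mathcal L}_{-n}$. Concretely, I would write ${\mathcal L}_n = A\,\partial_r + B_n$, where $A$ is the real matrix multiplying $\partial/\partial r$ in (\ref{eq:spect2}) and $B_n$ is the zeroth-order matrix. Since $B_n$ differs from a real matrix only by the purely imaginary diagonal entries $-inc_1/r$ and $-inc_2/r$, one has $\overline{B_n} = B_{-n}$, hence $\overline{{\mathcal L}_n \mathbf{w}} = {\mathcal L}_{-n}\overline{\mathbf{w}}$ for every sufficiently regular $\mathbf{w}(r)$.

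Next I would apply this identity to the eigenvalue equation (\ref{eq:spect}) in the form $({\mathcal L}_n + i\nu\,{\mathcal I})(\hat\rho,\, i\hat\psi)^T = 0$. Conjugating and using $\nu\in{\mathbb R}$ (so that $\overline{i\nu} = -i\nu$), one obtains $({\mathcal L}_{-n} - i\nu\,{\mathcal I})(\overline{\hat\rho},\, -i\overline{\hat\psi})^T = 0$, i.e. $({\mathcal L}_{-n} + (-i\nu)\,{\mathcal I})(\overline{\hat\rho},\, -i\overline{\hat\psi})^T = 0$. By the construction (\ref{eq:eigenL}) from the real-valued eigenfunctions $u_{nm}, v_{nm}$ of Theorem \ref{thm_basis}, the functions $\hat\rho$ and $\hat\psi$ may be taken real-valued, so $\overline{\hat\rho} = \hat\rho$, $\overline{\hat\psi}=\hat\psi$, and the displayed relation is precisely $({\mathcal L}_{-n} + (-i\nu)\,{\mathcal I})(\hat\rho,\, -i\hat\psi)^T = 0$, which is the assertion.

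Finally I would check that the boundary conditions (\ref{eq:bcspect}) are preserved: the condition $\hat\phi_n(R_1) = \hat\phi_n(R_2) = 0$ means $\hat\psi(R_1) = \hat\psi(R_2) = 0$, so the new second component $-i\hat\psi$ still vanishes at $R_1$ and $R_2$; and the normalization $\int_{R_1}^{R_2}\hat\rho\,r\,dr = 0$ is a real linear constraint, hence unchanged under conjugation. This confirms that $(\hat\rho,\,-i\hat\psi)$ lies in the domain of ${\mathcal L}_{-n}$ and is a genuine eigenvector for the eigenvalue $-i\nu$.

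I do not expect a real obstacle here; the only points requiring a word of care are the observation that conjugation sends $n\mapsto -n$ (visible directly from (\ref{eq:spect2})) and that the eigenfunctions can be chosen real (visible from (\ref{eq:eigenL}) and the remark preceding it). One could alternatively deduce the same statement from the invariance of the linearized system (\ref{epsilonsys})--(\ref{drdth}) under $\theta\mapsto-\theta$, which exchanges the Fourier modes $e^{in\theta}$ and $e^{-in\theta}$, but the conjugation argument above is the most economical.
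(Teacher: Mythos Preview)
Your argument is correct. The paper itself omits the proof of this lemma (it is stated as ``easy'' and ``left to the reader''), and the complex-conjugation argument you give is exactly the natural one: since the only non-real entries of ${\mathcal L}_n$ in (\ref{eq:spect2}) are the diagonal terms $-inc_1/r$ and $-inc_2/r$, conjugation sends ${\mathcal L}_n$ to ${\mathcal L}_{-n}$, and the eigenvalue equation transforms as you describe. Your remarks on the reality of $\hat\rho,\hat\psi$ via (\ref{eq:eigenL}) and on the preservation of the boundary conditions (\ref{eq:bcspect}) complete the verification.
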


\noindent
As a consequence of this Lemma the eigenvalues for $n=0$ come in opposite pairs and we number them such that 
$ \nu_{0 \, 2m} = - \nu_{0 \, 2m-1}$. Therefore, the sequence of eigenvalues of ${\mathcal L}_0$ is $(- i\nu_{0 \, 2m}, i\nu_{0 \, 2m})_{m \geq 1}$. We note that $0$ is not an eigenvalue of ${\mathcal L}_0$.

\subsection{The spectral problem for ${\mathcal L}$ and resolution of the initial value problem}

\medskip
We now turn to the operator ${\mathcal L}$ defined on $L^2((R_1,R_2)\times (0,2\pi))^2$ by (\ref{drdth}) with domain 
\begin{eqnarray*}
&&\hspace{-0.cm}
D({\mathcal L}) = \{ (\rho,\phi) \in H^1((R_1,R_2)\times (0,2\pi))^2 \, | \, 
\phi (R_1, \theta) = \phi (R_2, \theta) = 0, \, \mbox{ a.e. } \theta \in (0,2\pi), \\
&&\hspace{6cm}
(\rho,\phi)(r,0) = (\rho,\phi)(r,2\pi), \,  \mbox{ a.e. } r \in (R_1,R_2) \}. 
\end{eqnarray*}
Using Theorem \ref{thm_basis} and Lemma \ref{lem:parity}, we can state the following theorem (the proof of which is immediate and left to the reader): 

\begin{theorem}
The spectrum of ${\mathcal L}$, $\mbox{Spec} \,  {\mathcal L}$ is discrete, and consists of 
$$ \mbox{Spec} \,  {\mathcal L} =  \Big( \bigcup_{n \geq 1, m \geq 0} \{i \nu_{nm}, - i \nu_{nm} \} \Big) \bigcup \Big(  \bigcup_{m \geq 1} \{i \nu_{0 \, 2m}, - i \nu_{0 \, 2m}\} \Big) , $$
associated to the following basis of eigenvectors 
\begin{eqnarray*}
&& \Big( \bigcup_{n \geq 1, m \geq 0}  \{(\hat \rho_{nm}, i \hat \psi_{nm} )e^{i n \theta}, (\hat \rho_{nm}, - i \hat \psi_{nm} )e^{- i n \theta} \}  \Big) \,  \bigcup \, \Big( \bigcup_{m \geq 1}  \{(\hat \rho_{0\, 2m}, i \hat \psi_{0\,2m} ), \\
&& \hspace{9.5cm} 
(\hat \rho_{0\, 2m}, - i \hat \psi_{0\,2m})\}  \Big) ,
\end{eqnarray*}
which is a Hilbert basis in $L^2((R_1,R_2)\times (0,2\pi))^2$ for the inner product 
\begin{eqnarray}
\label{eq:innerprod2}
&& \hspace{1cm} \left\langle \hspace{-0.2cm}\left\langle\left( \begin{array}{c} \rho \\ \phi \end{array} \right) , \left( \begin{array}{c} \tilde \rho \\ \tilde \phi \end{array} \right) \right\rangle \hspace{-0.2cm}\right\rangle\\
&& \hspace{0.8cm} = \frac{1}{2 \pi} \int_0^{2 \pi} \int_{R_1}^{R_2} \Big( \frac{\Theta}{c_1} \, r^{-(\alpha - 1)} \, \rho(r,\theta) \, \overline{\tilde \rho(r,\theta)}  +  (\rho_s^*)^2 \, r^{\alpha + 1} \, \phi(r,\theta) \, \overline{\tilde \phi(r,\theta)} \Big) \, dr \, d \theta. \nonumber
\end{eqnarray}
\label{thm:specL}
\end{theorem}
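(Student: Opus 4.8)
The plan is to diagonalize $\mathcal{L}$ by the angular Fourier transform and reduce everything to Theorem~\ref{thm_basis} and Lemma~\ref{lem:parity}. Write $\mathcal{H}=L^2((R_1,R_2)\times(0,2\pi))^2$ as the orthogonal Hilbert sum $\bigoplus_{n\in\mathbb{Z}}\mathcal{H}_n$, where $\mathcal{H}_n$ is the closed subspace of pairs of the form $(\hat\rho(r),\hat\phi(r))\,e^{in\theta}$; these subspaces are mutually orthogonal for the weighted inner product \eqref{eq:innerprod2} because the angular exponentials are orthogonal in $L^2(0,2\pi)$ and the radial weights in \eqref{eq:innerprod2} do not mix Fourier modes. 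The coefficient matrices of $\mathcal{L}$ in \eqref{drdth} do not depend on $\theta$ and the only $\theta$-derivative is diagonal, so $\mathcal{L}$ leaves each $\mathcal{H}_n$ invariant and acts on it exactly as the operator $\mathcal{L}_n$ of \eqref{eq:spect2}; one checks that $D(\mathcal{L})$ splits accordingly, a pair belonging to $D(\mathcal{L})$ iff, after the substitutions \eqref{eq:rhonphin}, \eqref{eq:defuv}, each Fourier coefficient lies in $D(\mathcal{A}_n)$ and the resulting $H^1$-norms are square-summable against the $n^2$ weight coming from $\partial_\theta$. Thus $\mathcal{L}=\bigoplus_n\mathcal{L}_n$ as an orthogonal direct sum of closed, densely defined operators.

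The spectral information on each summand is already in place: the discussion following Theorem~\ref{thm_basis} records that $(\hat\rho_{nm},i\hat\psi_{nm})_{m\ge0}$, defined by \eqref{eq:eigenL}, is a Hilbert basis of $L^2(R_1,R_2)^2$ for the inner product \eqref{eq:innerprod1}, made of eigenfunctions of $\mathcal{L}_n$ with eigenvalues $i\nu_{nm}$; multiplying by $e^{in\theta}$ turns it into a Hilbert basis of $\mathcal{H}_n$ for \eqref{eq:innerprod2}, so $\mathrm{Spec}\,\mathcal{L}_n=\{i\nu_{nm}\}_{m\ge0}$, which is discrete in $\mathbb{C}$ because $|\nu_{nm}|\to\infty$ as $m\to\infty$ by Theorem~\ref{thm_basis}. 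By Lemma~\ref{lem:parity}, the eigenvalues and eigenfunctions of $\mathcal{L}_{-n}$ are obtained from those of $\mathcal{L}_n$ by conjugation ($i\nu_{nm}\mapsto-i\nu_{nm}$, $i\hat\psi_{nm}\mapsto-i\hat\psi_{nm}$), so it suffices to keep $n\ge1$ together with $n=0$; for $n=0$ the same Lemma forces the eigenvalues of $\mathcal{L}_0$ to occur in opposite pairs, which is precisely what allows the relabelling $\nu_{0\,2m}=-\nu_{0\,2m-1}$ and shows $0\notin\mathrm{Spec}\,\mathcal{L}_0$. Concatenating these bases over the retained $n$ and using completeness of $\{e^{in\theta}\}_{n\in\mathbb{Z}}$ in $L^2(0,2\pi)$ gives, by the standard Hilbert-sum argument, the announced complete orthonormal system of eigenvectors of $\mathcal{L}$ in $\mathcal{H}$ and the announced list of eigenvalues; since the eigenvectors are complete, $\mathcal{L}$ has no continuous spectrum and $\mathrm{Spec}\,\mathcal{L}$ is exactly the closure of that list.

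The one point that deserves real care — and the main obstacle — is to show that this closure adds nothing, i.e. that the $\nu_{nm}$ have no finite accumulation point, equivalently that $\mathcal{L}$ has compact resolvent, equivalently that $\min_m|\nu_{nm}|\to\infty$ as $|n|\to\infty$. I would read this off the dispersion relation attached to \eqref{eq:spect2}: writing $A_1$, $A_2$ for the first-order coefficient matrices of $\mathcal{L}_n$ (with $A_2=\mathrm{diag}(-c_1/r,-c_2/r)$ and $A_1$ anti-diagonal), a quadratic-form / WKB estimate reduces the required bound to controlling $\det(kA_1+nA_2)=c_1c_2\bigl(\tfrac{n^2}{r^2}-\tfrac{k^2}{\alpha}\bigr)$ from below, uniformly in $r\in[R_1,R_2]$ and in the Dirichlet-quantized radial wavenumber $k$, the structural identity $c_1c_2/\alpha=c_1\Theta>0$ being what one exploits. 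When $c_2<0$ (hence $\alpha<0$) this determinant equals $c_1c_2\bigl(\tfrac{n^2}{r^2}+\tfrac{k^2}{|\alpha|}\bigr)$, whose modulus is bounded below by $c_1|c_2|\bigl(\tfrac{n^2}{R_2^2}+\tfrac{k^2}{|\alpha|}\bigr)$; then $\min_m|\nu_{nm}|\gtrsim|n|$, the principal symbol of $\mathcal{L}$ is elliptic, the compact embedding $D(\mathcal{L})\hookrightarrow\hookrightarrow\mathcal{H}$ yields compactness of the resolvent directly, and — after the one-line integration by parts (using the weights in \eqref{eq:innerprod2} and the boundary condition $\phi(R_i,\cdot)=0$) that shows $\mathcal{L}$ is skew-adjoint — the classical spectral theorem for skew-adjoint operators with compact resolvent reproves the theorem, in agreement with the Fourier construction above. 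When $c_2>0$ the determinant vanishes along $k=\sqrt\alpha\,|n|/r$, so high-$n$ radial modes can be near-resonant; there the robust statement remains that the listed eigenvectors still form a Hilbert basis and exhaust the point spectrum, and I would expect the bulk of the technical work to lie in either establishing the uniform lower bound on $|\nu_{nm}|$ nonetheless (for instance directly from the variational form of \eqref{phin2nd}, \eqref{vtildeh}) or in pinning down which accumulation points the closure contributes.
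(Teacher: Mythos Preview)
Your first two paragraphs are exactly the paper's intended argument: the paper states the theorem ``using Theorem~\ref{thm_basis} and Lemma~\ref{lem:parity}'' and declares the proof ``immediate and left to the reader,'' meaning precisely the Fourier decomposition $\mathcal{L}=\bigoplus_n\mathcal{L}_n$ together with the already-established eigenbasis of each $\mathcal{L}_n$ and the parity relation between $\mathcal{L}_n$ and $\mathcal{L}_{-n}$.

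Your third paragraph goes beyond what the paper does. The paper does not establish that $\min_m|\nu_{nm}|\to\infty$ as $|n|\to\infty$, nor that $\mathcal{L}$ has compact resolvent; it simply lists the eigenvalues and calls the spectrum ``discrete.'' The paper is evidently using ``discrete'' in the loose sense of ``pure point, with the eigenvalues enumerated as follows,'' which does follow immediately from the Hilbert-basis statement, rather than in the strict sense of ``isolated eigenvalues of finite multiplicity with no finite accumulation point.'' Your observation that the latter reading would require an additional uniform lower bound on $|\nu_{nm}|$ in $n$, and that this is not obvious when $c_2>0$ because the principal symbol of $\mathcal{L}$ degenerates, is correct and is a genuine gap in the paper's formulation if one insists on the strict reading; but it is extra analysis that the paper neither performs nor claims to perform. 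For the purposes of matching the paper, your first two paragraphs suffice.
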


\noindent
From this theorem, we have the immediate 

\begin{theorem}\label{thm_pert_sum}
Let $(\tilde \rho, \tilde \phi)(r,\theta,t)$ be the solution of the linearized model (\ref{epsilonsys}) with initial condition $(\tilde \rho_I, \tilde \phi_I) \in  L^2((R_1,R_2)\times (0,2\pi))^2$. By standard semigroup theory, this solution belongs to $C^0([0,T], L^2((R_1,R_2)\times (0,2\pi))^2 \cap L^2([0,T], D({\mathcal L}))$, for all time horizon $T \in {\mathbb R}$. Additionally, we assume that  $(\tilde \rho_I, \tilde \phi_I)$ is real-valued. Then, $(\tilde \rho, \tilde \phi)(r,\theta,t)$ can be expressed as: 
\begin{eqnarray*}
&&\left( \begin{array}{c} \tilde \rho \\ \tilde \phi \end{array} \right) (r,\theta,t) = \sum_{n \geq 1, \, m \geq 0} k_{nm} \left( \begin{array}{r} \hat \rho_{nm}(r) \, \cos(n \theta + \nu_{nm} t + \varphi_{nm})  \\ - \hat \psi_{nm}(r) \, \sin(n \theta + \nu_{nm} t + \varphi_{nm})  \end{array} \right) \nonumber \\
&&\hspace{5cm}
+ \sum_{m \geq 1} k_{0 \, 2m} \left( \begin{array}{r} \hat \rho_{0 \, 2m}(r) \, \cos(\nu_{0 \, 2m} t + \varphi_{0 \, 2m} )  \\ - \hat \psi_{0 \, 2m}(r) \, \sin(\nu_{0 \, 2m} t + \varphi_{0 \, 2m} )  \end{array} \right), 
\end{eqnarray*}
where the series converges in $L^2((R_1,R_2)\times (0,2\pi))^2$ and where $k_{nm}$ and $\varphi_{nm}$ are given, for all $(n,m)$ with ($n\geq 1$ and $m \geq 0$) or  ($n=0$ and $m \geq 1$), by:
\begin{equation}
\label{eq:defknm}
\left\langle \hspace{-0.2cm}\left\langle\left( \begin{array}{c} \rho_I \\ \phi_I \end{array} \right) , \left( \begin{array}{c} \hat \rho_{nm} e^{i n \theta} \\ i \hat \psi_{nm} e^{i n \theta} \end{array} \right) \right\rangle \hspace{-0.2cm}\right\rangle = \frac{1}{2} \, k_{nm} \, e^{i \, \varphi_{nm}} . 
\end{equation}
\end{theorem}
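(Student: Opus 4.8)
The plan is to read the representation formula off the spectral decomposition of Theorem~\ref{thm:specL}, of which it is essentially a rewriting in real, oscillatory form. For the well-posedness and regularity statement: by Theorem~\ref{thm:specL}, $L^2((R_1,R_2)\times(0,2\pi))^2$ admits a Hilbert basis of eigenvectors of $\mathcal{L}$ all of whose eigenvalues are purely imaginary, for the inner product $\langle\langle\cdot,\cdot\rangle\rangle$ of (\ref{eq:innerprod2}). Hence $\mathcal{L}$ is skew-adjoint for $\langle\langle\cdot,\cdot\rangle\rangle$ (a property reflected in the choice of weights there), and by Stone's theorem it generates a strongly continuous --- indeed unitary --- group $(e^{-t\mathcal{L}})_{t\in\mathbb{R}}$; the asserted regularity of the solution and the identity $(\tilde\rho,\tilde\phi)(\cdot,t)=e^{-t\mathcal{L}}(\tilde\rho_I,\tilde\phi_I)$ then follow from standard semigroup theory.

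Next I would expand the initial datum on this basis. Write $W_{nm}=(\hat\rho_{nm},i\hat\psi_{nm})\,e^{in\theta}$ for the eigenvectors of Theorem~\ref{thm:specL} (for $n=0$ only the even indices $2m$ with $m\geq 1$ occur, in view of Lemma~\ref{lem:parity} and of $0\notin\mathrm{Spec}\,\mathcal{L}_0$). Since the functions $u_{nm},v_{nm}$ of Theorem~\ref{thm_basis} are real-valued, the changes of unknowns (\ref{eq:rhonphin}), (\ref{eq:defuv}), (\ref{eq:eigenL}) show $\hat\rho_{nm}$ and $\hat\psi_{nm}$ are real-valued, so the complex conjugate $\overline{W_{nm}}$ is exactly the partner basis vector $(\hat\rho_{nm},-i\hat\psi_{nm})\,e^{-in\theta}$. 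Expanding then gives $(\tilde\rho_I,\tilde\phi_I)=\sum\big(a_{nm}W_{nm}+b_{nm}\overline{W_{nm}}\big)$, the sum running over $n\geq 1,\,m\geq 0$ together with $n=0$ and even index, with $a_{nm}=\langle\langle(\tilde\rho_I,\tilde\phi_I),W_{nm}\rangle\rangle$ by orthonormality (the norm of $\langle\langle\cdot,\cdot\rangle\rangle$ being equivalent to the standard $L^2$ norm, its weights being bounded above and below on $[R_1,R_2]$). Because $(\tilde\rho_I,\tilde\phi_I)$ is real-valued, comparing this expansion with its own complex conjugate and using uniqueness of the coefficients forces $b_{nm}=\overline{a_{nm}}$. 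Writing $a_{nm}=\frac{1}{2}k_{nm}e^{i\varphi_{nm}}$ with $k_{nm}\geq 0$, the identity $a_{nm}=\langle\langle(\tilde\rho_I,\tilde\phi_I),W_{nm}\rangle\rangle$ is precisely (\ref{eq:defknm}).

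It remains to propagate in time and recombine. The group $e^{-t\mathcal{L}}$ multiplies $W_{nm}$ by the time factor $e^{i\nu_{nm}t}$ attached to that mode in the ansatz $e^{\lambda t}$ (with $\lambda=i\nu_{nm}$) used to derive (\ref{eq:spect}), and multiplies $\overline{W_{nm}}$ by the conjugate factor $e^{-i\nu_{nm}t}$. Applying $e^{-t\mathcal{L}}$ term by term --- legitimate since it is bounded and the expansion converges in $L^2$ --- and using $b_{nm}=\overline{a_{nm}}$, each conjugate pair sums to $2\,\mathrm{Re}\big(a_{nm}e^{i\nu_{nm}t}W_{nm}\big)$. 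Substituting $a_{nm}=\frac{1}{2}k_{nm}e^{i\varphi_{nm}}$ and evaluating the real part componentwise through $\mathrm{Re}(e^{i\beta})=\cos\beta$ and $\mathrm{Re}(i\,e^{i\beta})=-\sin\beta$ with $\beta=n\theta+\nu_{nm}t+\varphi_{nm}$ reproduces exactly the two components of the statement (for $n=0$ one sets $e^{in\theta}=1$). Since regrouping consecutive terms of an $L^2$-convergent series does not affect convergence, the displayed series converges in $L^2((R_1,R_2)\times(0,2\pi))^2$.

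I do not expect a genuine analytic obstacle: as the paper notes, the result is immediate once Theorem~\ref{thm:specL} is in hand. The only real care is organizational --- pairing each eigenvector with its complex conjugate, exploiting the reality of the data to halve the free coefficients, tracking the factor $\frac{1}{2}$ and the signs in the trigonometric identities, and checking that the $n=0$ sector (where the pairing $\nu_{0\,2m}=-\nu_{0\,2m-1}$ of Lemma~\ref{lem:parity} is used) is covered by the same computation.
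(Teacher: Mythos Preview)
Your proposal is correct and follows exactly the approach the paper intends: the paper states the theorem as an immediate consequence of Theorem~\ref{thm:specL} without giving a proof, and your argument---expand the real initial datum in the Hilbert basis of eigenvectors, pair each $W_{nm}$ with its conjugate $\overline{W_{nm}}$ using reality, apply the unitary group, and take real parts---is precisely the spelling-out of that immediacy. The bookkeeping (factor $\tfrac12$, sign in $-\sin$, the $n=0$ pairing via Lemma~\ref{lem:parity}) is handled correctly.
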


\begin{remark}
The mode indices $n$ and $m$ are related to the number of oscillations in the azimuthal and radial directions respectively. Below, we will refer to $n$ as the azimuthal mode index and $m$ the radial mode index. 
\end{remark}

\section{Numerical Computation of the Eigenvalues and Eigenfunctions}
\label{SecSSNum}

\subsection{Numerical method}
First, we discuss the special case $n=0$. From (\ref{eq:spect}), (\ref{eq:spect2}) (with $\lambda = i \nu$), the function $\hat \psi_0 = i \hat \phi_0$ is a solution to:
\begin{equation}
r\frac{\partial^2 \hat \psi_0}{\partial r^2}
-(\alpha+1)\frac{\partial \hat \psi_0}{\partial r}
+\frac{\alpha\nu^2}{c_1c_2}r \hat \psi_0 = 0,
\end{equation}
with homogeneous boundary conditions $\hat  \psi_0(R_1) = \hat \psi_0(R_2)=0$.  This a classical Bessel equation. Its solution is found e.g. in \cite{Bowman_Dover58}, p. 117 and is given by: 
\begin{equation}\label{solBowman}
\hat \psi_0(r) = \left\{
\begin{array}{ll}
r^{\frac{\alpha+2}{2}}\left[AJ_{\tilde{n}}(\beta r) + BY_{\tilde{n}}(\beta r)\right], 
&\text{ for integer } \tilde{n};\\
r^{\frac{\alpha+2}{2}}\left[AJ_{\tilde{n}}(\beta r) + BJ_{-\tilde{n}}(\beta r)\right], 
&\text{ for noninteger } \tilde{n}.
\end{array}
\right.
\end{equation}
Here, $J_{\tilde{n}}$ and $Y_{\tilde{n}}$ are the Bessel functions of the first and second kinds respectively, 
\[
\beta = \sqrt{\frac{\alpha\nu^2}{c_1c_2}}, \qquad
\tilde{n}^2 = \frac{(\alpha+2)^2}{4}, 
\]
and $(A,B)$ are determined from the boundary conditions. For the sake of simplicity, we focus on the case where $\tilde{n}$ is not an integer, but the extension of the considerations below to integer $\tilde{n}$ would be straightforward. The boundary conditions lead to a homogeneous linear system of two equations for $(A,B)$. The existence of a non-trivial solution $\hat \psi_0$ requires that the determinant of this system vanishes. This leads to the following relation: 
\begin{equation}\label{bdryn0}
\mathfrak{B}(\nu) = 
J_{\tilde{n}}(\beta R_1)J_{-\tilde{n}}(\beta R_2) -
J_{\tilde{n}}(\beta R_2)J_{-\tilde{n}}(\beta R_1) = 0.
\end{equation}
By finding the zeros of $\mathfrak{B}$, we obtain the eigenvalues $\nu$ and then the corresponding eigenfunctions $\hat \psi_0$.
 
For $n\geq 1$, we introduce the following numerical scheme. Given an integer $N$, we define a uniform meshsize $h = \frac{R_2-R_1}{N}$ on the interval $[R_1, R_2]$ and discretization points 
$R_1 = r_0 < r_{\frac12} < r_1 < \cdots < r_j<r_{j+\frac12} <\cdots <r_N = R_2$,
where
$r_j = R_1+jh$ and 
$r_{j+\frac12} = R_1+\left(j+\frac12\right)h$.
For each $n$, $\{\rho_{j+\frac12}\}$ and $\{\psi_j\}$ denote the numerical approximation of $\hat \rho_n$ and $\hat \psi_n = - i \hat \phi_n$ on grid points $r_{j+\frac12}$'s and $r_j$'s respectively. The numerical scheme is
\begin{subequations}\label{rnsysnum}
\begin{numcases}{}
\displaystyle \frac{c_1 n}{r_{j+\frac12}}\rho_{j+\frac12}
- \frac{c_1}{r_{j+\frac12}^{\alpha+1}}\frac{\psi_{j+1}-\psi_j}{h}  = \nu\rho_{j+\frac12}, &$0\leq j\leq N-1$,\\
\displaystyle \frac{c_2r_j^{\alpha+1}}{\alpha}\frac{\rho_{j+\frac12}
-\rho_{j-\frac12}}{h}+\frac{c_2n}{r_j}\psi_j = \nu\psi_j, &$1\leq j \leq N-1$.
\end{numcases}
\end{subequations}
No boundary condition is imposed on $\rho_j$. Concerning $\psi_j$, we have $\psi_0 = \psi_N = 0$.

\begin{remark}\label{remFDn0}
We can modify the scheme (\ref{rnsysnum}) and use it to compute the solution in the case $n=0$ by adding $\rho_{j+\frac12}$ and $\psi_j$ on both sides of the equations respectively. 
\end{remark}

\subsection{Eigenvalules}

In the numerical tests, we choose a set of parameter values given by
\begin{equation}
c_1 = 0.89307, c_2 = 0.69757, \Theta = 0.2, R_1 = 1.9, R_2 = 2.1.
\label{eq:parval}
\end{equation}
Accuracy tests (not reported here) have demonstrated that the numerical scheme is of order $2$ for any value of $n$. In the case $n=0$, we can illustrate the good accuracy of the scheme by comparing the computed value of the eigenvalue to its analytic expression~(\ref{bdryn0}). The comparision is given in Table \ref{n0nu}. With $N=1280$ mesh points, the scheme mentioned in Remark \ref{remFDn0} gives almost the exact eigenvalues.

\begin{table}[ht]
\caption{The eigenvalues $\nu$ for azimuthal mode $n=0$ and various values of the radial mode index $m$. Comparison between the method using the Bessel functions (formula (\ref{bdryn0})) and the scheme mentioned in Remark \ref{remFDn0} with $N=1280$ mesh points. The parameter values are given by  (\ref{eq:parval}).}
{\begin{tabular}{@{}|c||c|c||c|c||c|c|@{}}
  \hline
$m$ 					 &  1  	 &  2  	  &  3  	&  4  	 &  5  	   &  6     \\\hline 
$\nu$(Bessel)            &-6.6631& 6.6631 &-13.2895 &13.2895 &-19.9240 &19.9240 \\\hline
$\nu$(Finite Difference) &-6.6631& 6.6631 &-13.2895 &13.2895 &-19.9240 &19.9240 \\\hline
\end{tabular}}
\label{n0nu}
\end{table}

Table \ref{nnu} lists the eigenvalues corresponding to the first seven radial modes $m=1, \ldots, 6$, for the first four azimuthal modes $n=1, \ldots, 4$ computed by the numerical scheme (\ref{rnsysnum}). It shows that, starting from $m=1$,  the radial eigenmodes come by conjugate pairs of almost the same absolute value but opposite signs (see Columns (1,2), (3,4) and (5,6) in Table \ref{nnu}). The fact that they are not exactly opposite can be attributed to the breaking of the clockwise-anticlockwise symmetry due to the linearization about a steady-state with definite orientation (here the clockwise rotating steady-state has been chosen). Also, the difference between $c_1$ and $c_2$ plays a role in this discrepancy (see Sec. \ref{subsec_paramvar}) where it is shown that varying $c_2$ may increase it).  

\begin{table}[!ht]
\begin{center}
\caption{The first seven radial modes $m=0, \ldots, 6$ for the first four azimuthal modes $n=1, 2, 3, 4$,  computed by the finite difference method (\ref{rnsysnum}) with $N=400$ mesh points. The parameter values are given by  (\ref{eq:parval}). Starting from $m=1$, the radial eigenmodes appear in conjugate pairs of almost (but not equal) absolute value and opposite signs (compare the pairs $(m_1,m_2) = (1,2)$, $(3,4)$, $(5,6)$).}
{\begin{tabular}{|cc||c||c|c||c|c||c|c|}
  \hline
& $m$&  0  &   1    &  2  &  3  &  4  &  5  &  6   \\ 
$n$ & &    &       &    &    &    &    &     \\\hline  \hline
1 & &0.4452 &-6.2647 &7.0618  &-12.8913 &13.6876 &-19.5256 &20.3217 \\\hline
2 & &0.8905 &-5.8668 &7.4608  &-12.4935 &14.0860 &-19.1277 &20.7199 \\\hline
3 & &1.3357 &-5.4692 &7.8603  &-12.0958 &14.4845 &-18.7299 &21.1182 \\\hline
4 & &1.7810 &-5.0720 &8.2601  &-11.6983 &14.8833 &-18.3323 &21.5167 \\\hline
\end{tabular}}
\end{center}
\label{nnu}
\end{table}

\subsection{Eigenfunctions} For illustration purposes, we plot some of the eigenmodes 
\begin{equation}
\big(\rho_{nm}(r,\theta), - \psi_{nm}(r,\theta) \big) := \big(\hat \rho_{nm}(r) \cos n \theta, - \hat \psi_{nm}(r) \sin n \theta \big).
\label{eq:rhoperturb}
\end{equation}
 For a better interpretation of the results, we plot the perturbation density $\rho_{nm}$ 
and the orientation vector  
\begin{equation}
\Omega_{nm} (r, \theta) = \Big(\cos \big(- \frac{\pi}{2}-  \varepsilon \psi_{nm}(r,\theta) \big) \,, \,\sin \big(-\frac{\pi}{2} - \varepsilon \psi_{nm}(r,\theta) \big)\Big).
\label{eq:Omperturb}
\end{equation}
While the former corresponds to the perturbation only, the latter corresponds to the total solution (steady-state plus perturbation). We use a fairly large value of $\varepsilon$ in order to magnify the influence of the perturbation. Since the chosen annular domain is rather thin, we rescale the plot onto an artificially wider annulus. Again, the chosen set of parameters is given by~(\ref{eq:parval}). 

Fig.~\ref{n0t1} displays the modes $(n,m)= (0,4)$ (Figs. \ref{n0t1} (a, b)) and $(n,m)= (4,1)$ (Figs. \ref{n0t1} (c, d)). The left figures (Figs. \ref{n0t1} (a, c)) show the color-coded values of the density perturbations $\rho_{nm}$~(\ref{eq:rhoperturb}) as functions of the two-dimensional coordinates $(x,y)$ in the annulus. 
The right figures (Figs. \ref{n0t1} (b, d)) provide a representation of the orientation vector field $\Omega_{nm}$ (\ref{eq:Omperturb}). In the case of mode $(n,m)= (0,4)$ (Figs. \ref{n0t1} (a, b)), since $n=0$, the solution does not vary in the $\theta$ direction and the density perturbation $\rho_{nm}$ has two zeros in the $r$ direction.  In the case of mode $(n,m)= (4,1)$ (Figs. \ref{n0t1} (c, d)), the solution displays four periods in the $\theta$ direction and has only one zero of the density perturbation $\rho_{nm}$ in the $r$ direction. 
\begin{figure}[!ht]
\begin{center}
\subfigure[$\rho_{nm}: \, (n,m)=(0,4), \nu_{nm} = 13.29.$]
{\includegraphics[width = 0.49\textwidth]{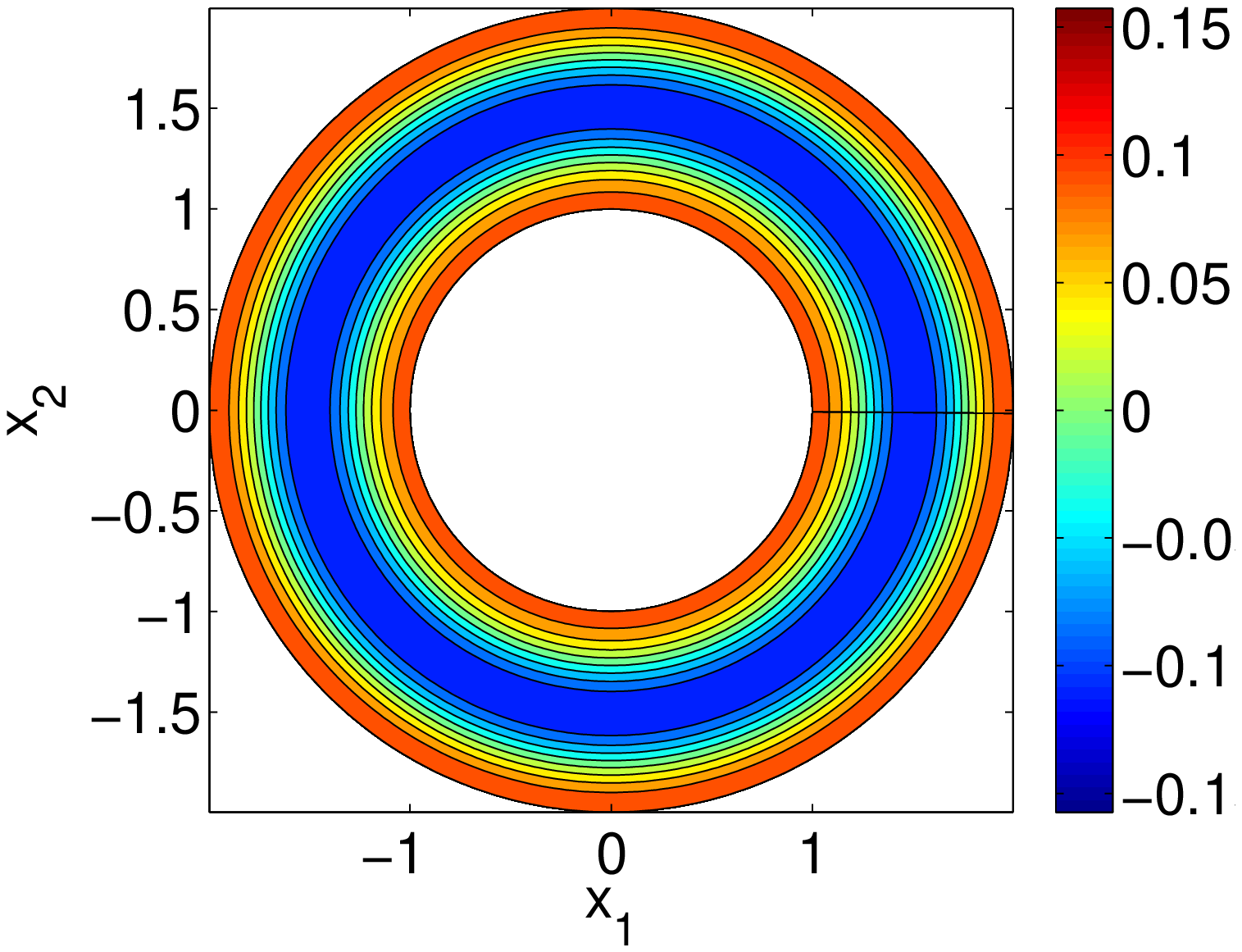}}
\hspace{0.6cm}
\subfigure[$\Omega_{nm}: \,(n,m)=(0,4), \nu_{nm} = 13.29.$]
{\includegraphics[width = 0.42\textwidth]{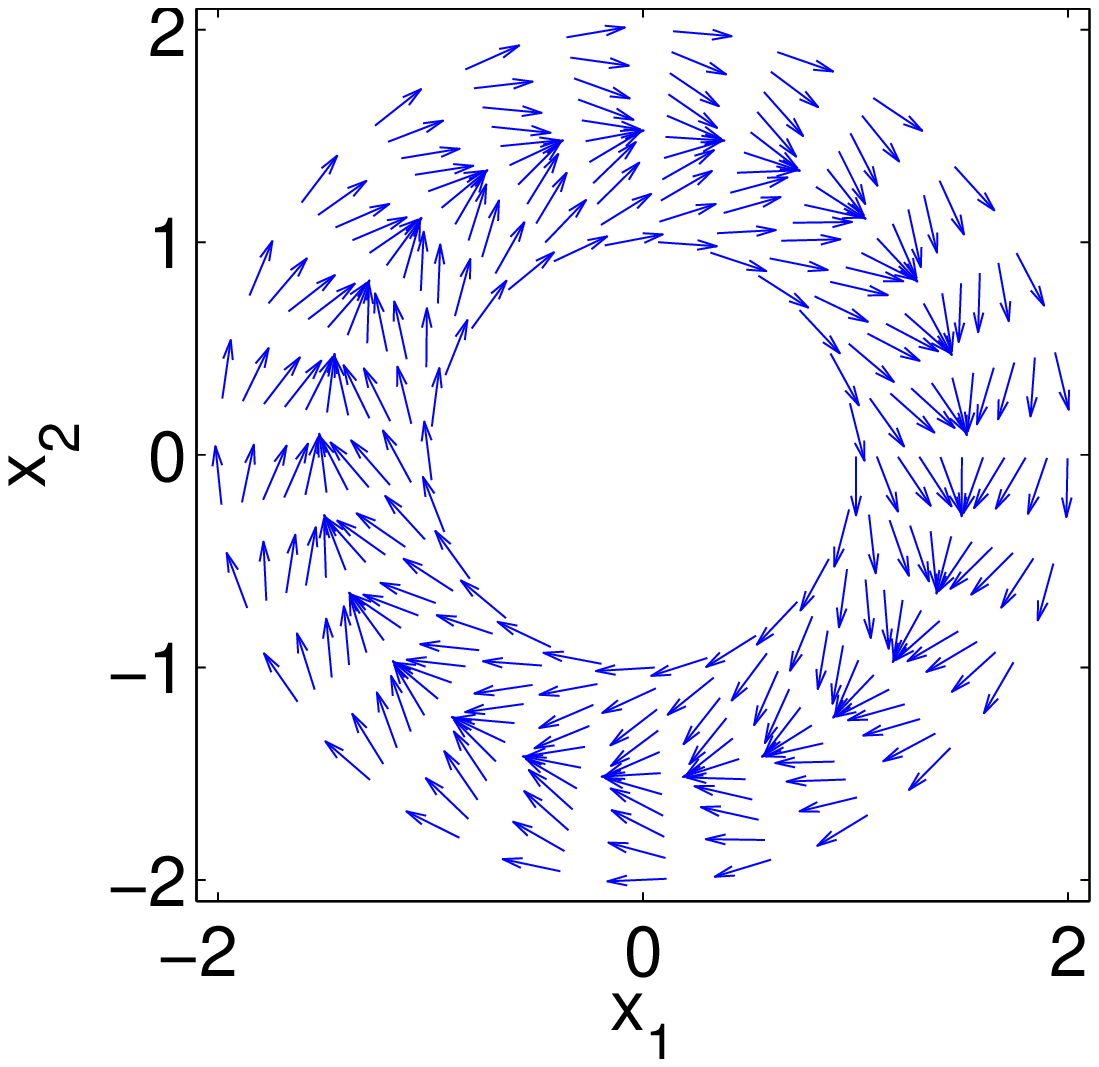}}\\
\vspace{0.4cm}
$\mbox{}$\\
\subfigure[$\rho_{nm}: \, (n,m)=(4,1), \nu_{nm} = -5.07.$]
{\includegraphics[width = 0.49\textwidth]{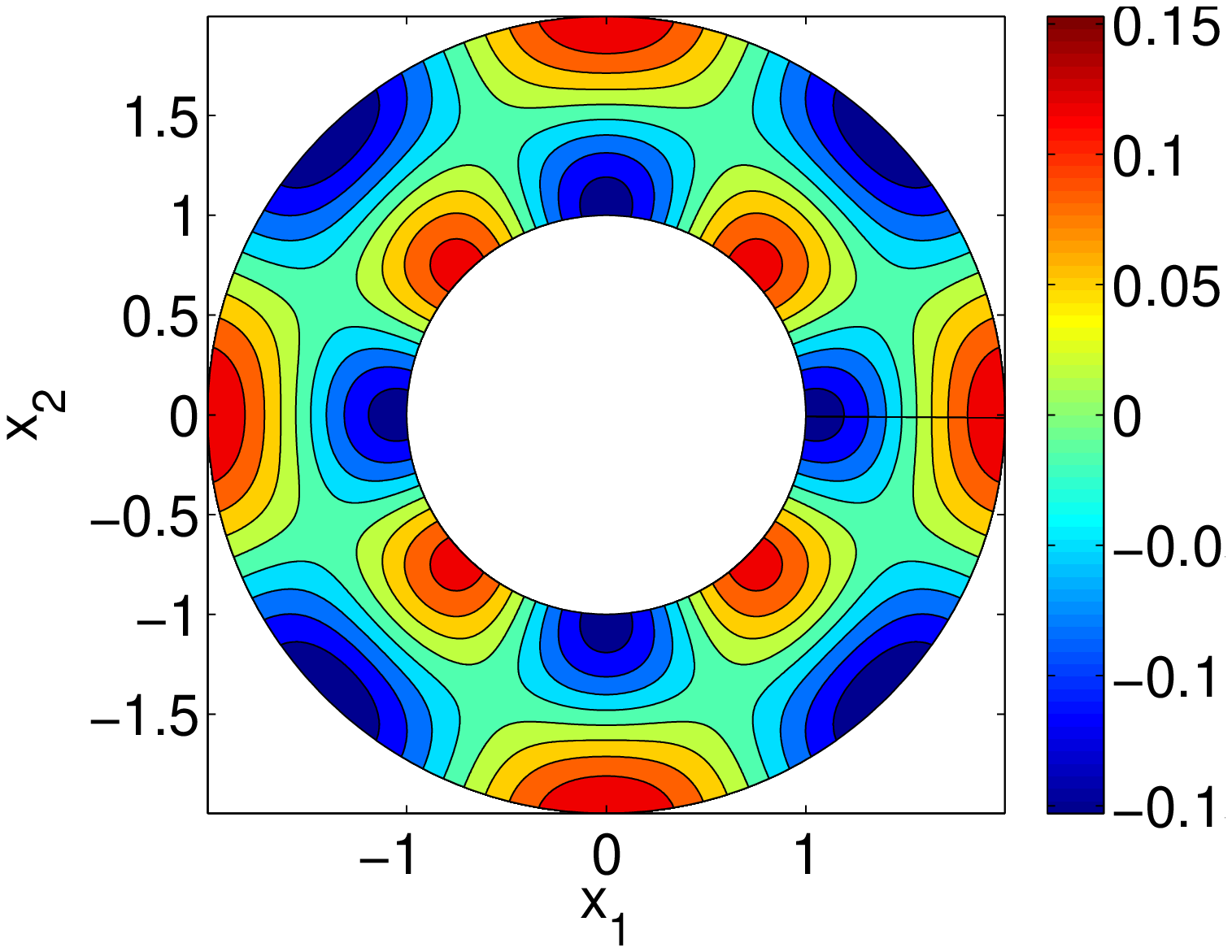}}
\hspace{0.6cm}
\subfigure[$\Omega_{nm}:  (n,m)=(4,1), \nu_{nm} = -5.07.$]
{\includegraphics[width = 0.42\textwidth]{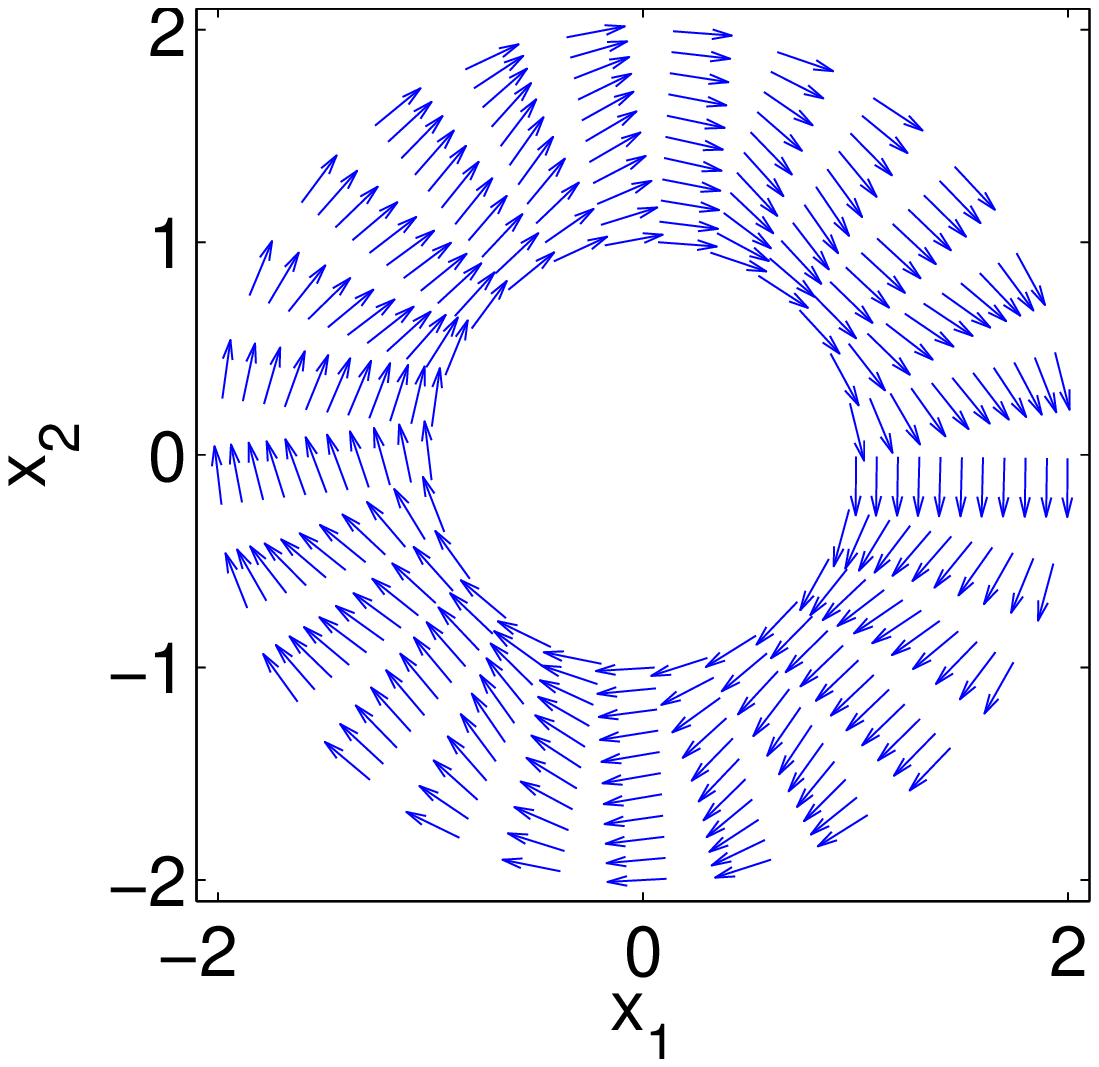}}
\end{center}
\caption{Density perturbation $\rho_{nm}(r,\theta)$  (\ref{eq:rhoperturb}) (Figs. (a, c)) and orientation vector~$\Omega_{nm}$~(\ref{eq:Omperturb}) (Figs.~(b, d)) for mode $(n,m)=(0,4)$ (Figs.~(a, b)) and mode $(n,m)=(4,1)$ (Figs.~(b, d)), as functions of the two-dimensional cartesian coordinates $(x,y) \in {\mathbb R}^2$ in the annulus (color online). The values of the density are color-coded according to the color-bar to the right of the figure. The orientation vector field is represented by blue arrows. The parameter set is given by (\ref{eq:parval}) and $N=400$ mesh points in the radial direction have been used. In Figs.~(a, b), since $n=0$, there is no variation in $\theta$ and the solution is plotted at $t=1$ to make the perturbation visible.}
\label{n0t1}
\end{figure}

\subsection{Variation of the parameters $R_1, R_2, c_1, c_2$ and $\Theta$}
\label{subsec_paramvar}

We numerically investigate the influence of the parameters $R_1, R_2, c_1, c_2$ and $\Theta$ on the eigenvalues. We take the parameter values (\ref{eq:parval}) as references. We vary one of the five parameters $(c_1, c_2, \Theta, R_1, R_2)$ at a time, fixing the other values to those of (\ref{eq:parval}). 

Fig.~\ref{n2para} (a) shows the eigenvalues $\nu$ as functions of the parameters  $R_1$. The inserted-inside Fig.~\ref{n2para} (a)  display how the eigenvalues depend on $c_1$. Fig.~\ref{n2para} (b) shows how the eigenvalues depend on $c_2$. Four eigenvalues corresponding to the modes $n = 2, m=0,1,2,3$ are displayed We observe that when the annular domain becomes narrower, i.e. $R_1$ is larger and closer to $R_2$, the absolute value of $\nu$ is getting larger.  The influence of $R_2$ (not displayed) is similar. As a result, the phase velocities of the modes become faster in a thinner domain, except for $m=0$, which corresponds to no oscillation in the radial direction. As a function of $c_1$ and $\Theta$, $|\nu|$ is monotonically increasing for all values of $n$ (see insert inside Fig.~\ref{n2para} (a) for $c_1$. The behavior as a function of $\Theta$ is similar and not displayed). The effect of a variation of $c_2$ is different: $\nu$ itself (instead of $|\nu|$) is increasing with respect to $c_2$. 
\begin{figure}[!ht]
\begin{center}
\subfigure[Varying $R_1$ (main figure) and $c_1$ (insert)]{
\begin{overpic}[scale=0.47]{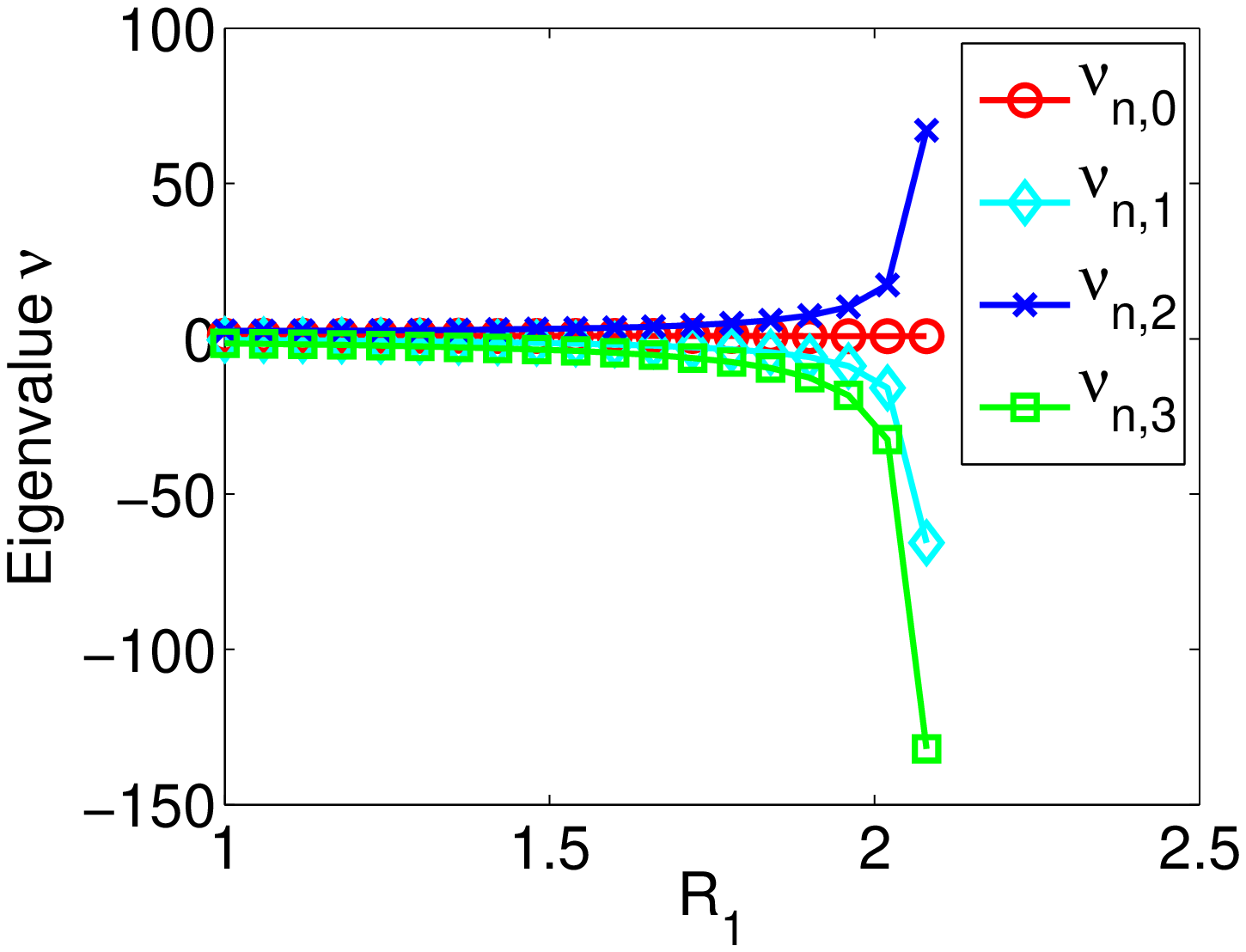}
\put(18,12){\includegraphics[scale=0.17]{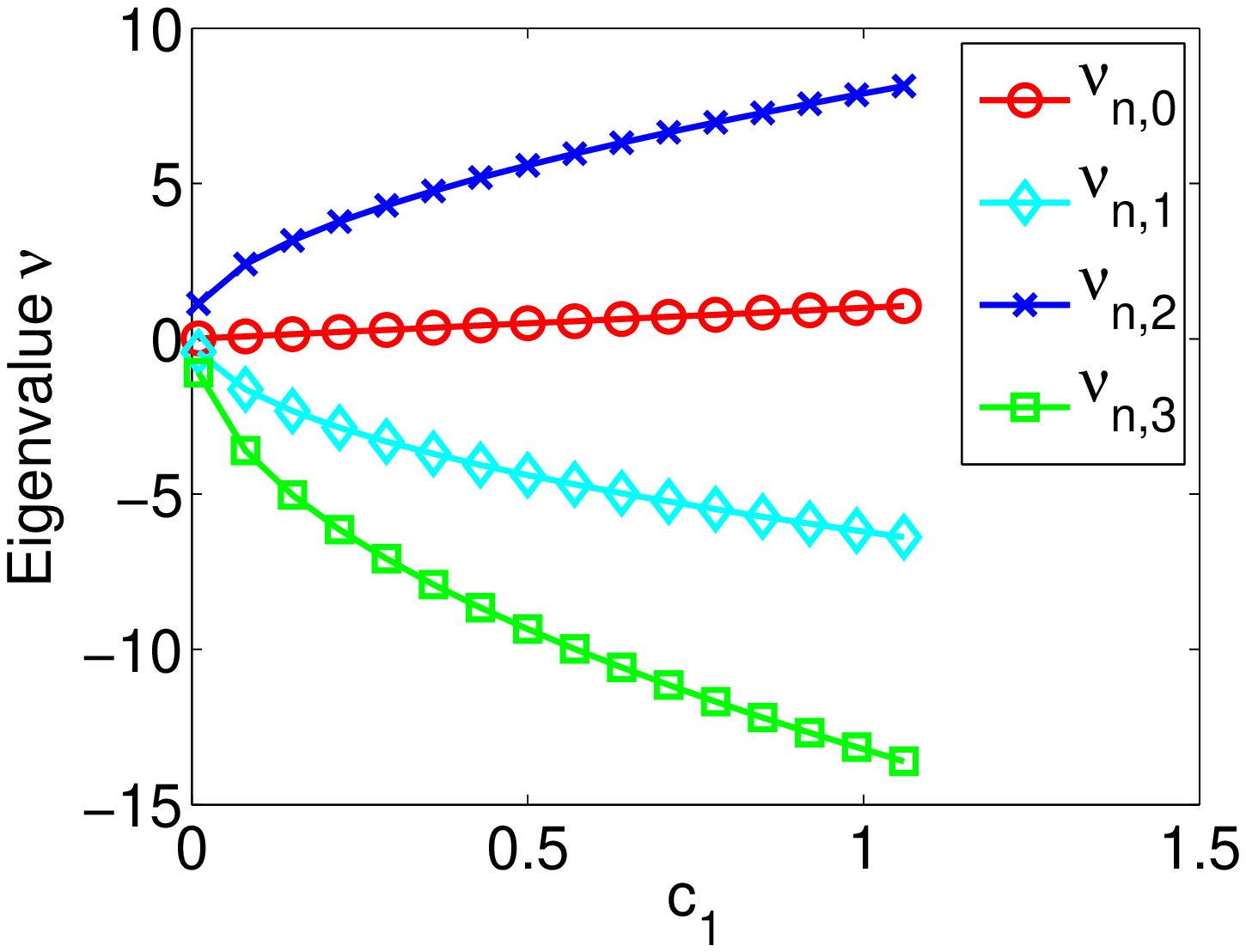}}
\end{overpic}}
\subfigure[Varying $c_2$]{\includegraphics[width = 0.46\textwidth]{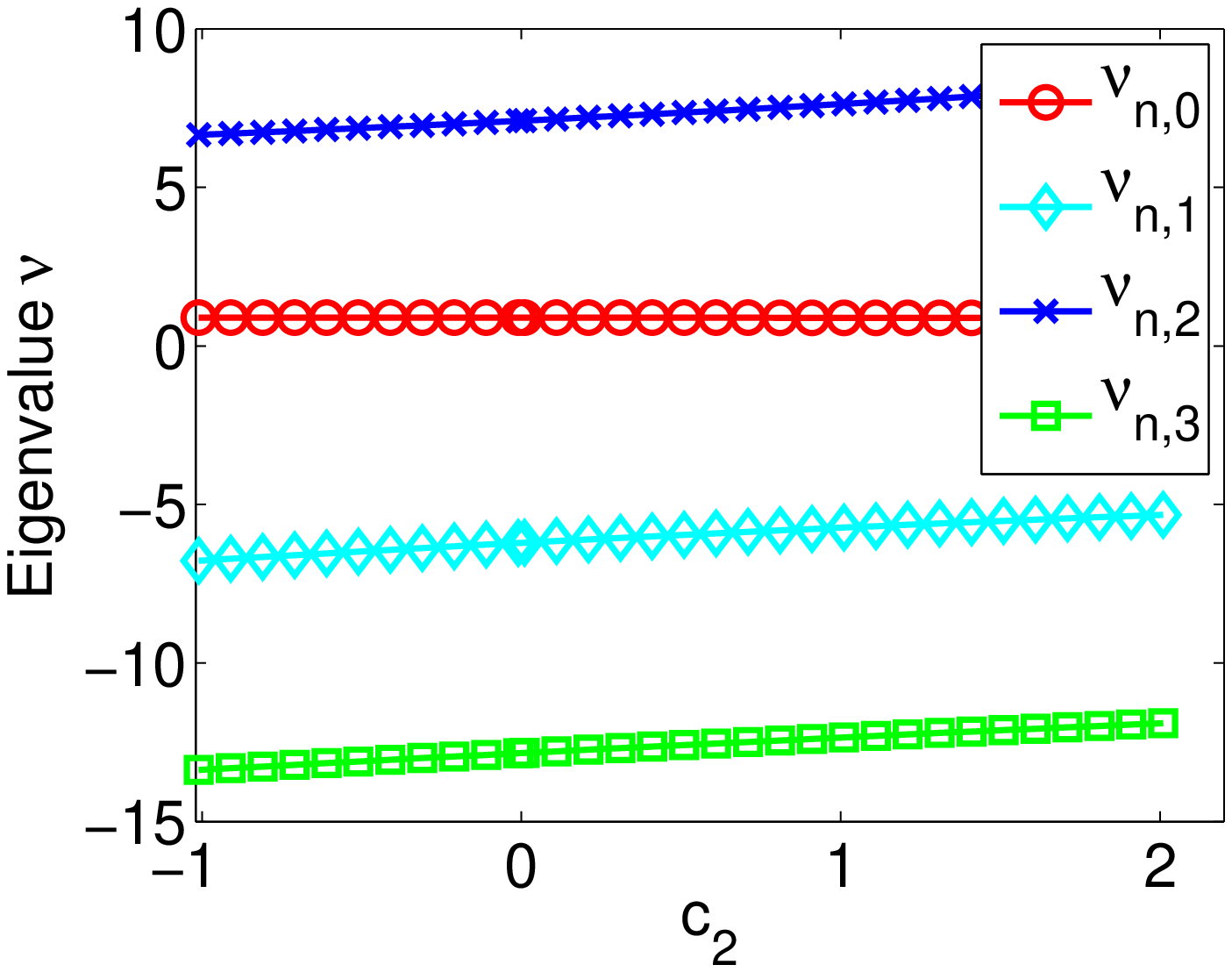}}
\end{center}
\caption{Eigenvalues $\nu$ as functions of the parameters  $R_1$ (Fig. (a)), $c_1$ (insert inside Fig. (a)) and $c_2$ (Fig. (b)). While varying one parameter, the other parameters are fixed to the values given by (\ref{eq:parval}) and $N=400$ mesh points in the radial direction have been used. We display four eigenvalues corresponding to the modes $n = 2, m=0,1,2,3$ }
\label{n2para}
\end{figure}

\section{Numerical Resolution of the Nonlinear SOH Model}
\label{Secrelax}

\subsection{Relaxation model in cylindrical coordinates}

In this section, we discuss the numerical resolution of the nonlinear SOH model (\ref{model08-1})-(\ref{model08-3}), subject to the boundary conditions (\ref{sysbdry}). Its numerical solution will be compared with the solution of the linearized problem found in Sec.~\ref{SecSSNum}. We will further analyze how the nonlinear model departs from its linearization when the perturbation of the steady-state becomes large. One of the difficulties in solving the nonlinear model is the geometric constraint $|\Omega| = 1$ (\ref{model08-3}) and the resulting non-conservativity of the model, arising from the presence of the projection operator ${\mathcal P}_{\Omega^\bot}$ in (\ref{model08-2}). We rely on a method proposed in \cite{Motsch_Navoret_MMS11} where the SOH model is approximated by a relaxation problem consisting of an unconstrained conservative hyperbolic system supplemented with a relaxation operator onto vector fields satisfying the constraint (\ref{model08-3}). In this section, we introduce this relaxation system in cylindrical coordinates in the annular domain.

The relaxation model is given by:
\begin{subequations}\label{relaxmodel}
\begin{numcases}{}
\partial_t \rho^{\eta} + c_1\nabla\cdot(\rho^{\eta}\Omega^{\eta}) = 0, \label{relaxmodel-1}\\
\partial_t(\rho^{\eta}\Omega^{\eta}) + c_2\nabla\cdot(\rho^{\eta}\Omega^{\eta}\otimes\Omega^{\eta})  
+ \Theta \nabla\rho^{\eta} 
= \frac{\rho^{\eta}}{\eta}\left(1-|\Omega^{\eta}|^2\right)\Omega^{\eta}, \label{relaxmodel-2}
\end{numcases}
\end{subequations}
where $\eta \ll 1$ and $\Omega^\eta \in {\mathbb R}^2$ is not constrained to be of unit norm. The relaxation term at the right-hand side of (\ref{relaxmodel-2}) contributes to making $|\Omega^\eta| \approx 1 $. In cylindrical coordinates, let $\Omega^{\eta} = (q^{\eta}\cos\phi^{\eta}, q^{\eta}\sin\phi^{\eta})$, $q^\eta \geq 0$. Dropping the superindex $\eta$ for simplicity, (\ref{relaxmodel}) can be written as
\begin{eqnarray}
&&
\partial_t \rho 
+\frac{c_1}{r} \Big(\frac{\partial}{\partial r}(r\rho q\cos\phi) + \frac{\partial}{\partial\theta}(\rho q\sin\phi) \Big)= 0, 
\label{relaxcylin1} \\
&&
\label{relaxcylin2}
\partial_t  ( \rho q\cos\phi)  +\frac{c_2}{r} \Big( \frac{\partial}{\partial r}(r\rho q^2\cos^2\phi) + \frac{\partial}{\partial\theta}(\rho q^2\sin\phi\cos\phi)-\rho q^2\sin^2\phi \Big)  \\
&&\hspace{6cm}+\Theta  \frac{\partial \rho}{\partial r} = \frac{\rho}{\eta}(1-q^2)q \cos\phi, 
\nonumber\\
&& \label{relaxcylin3}
\partial_t (\rho q\sin\phi) +\frac{c_2}{r} \Big(  \frac{\partial}{\partial r}(r\rho q^2\sin\phi\cos\phi)+\frac{\partial}{\partial\theta}(\rho q^2\sin^2\phi)+\rho q^2\sin\phi\cos\phi \Big) \\
&&\hspace{6cm}+\Theta \frac{1}{r}\frac{\partial\rho}{\partial\theta} = \frac{\rho}{\eta}(1-q^2)q \sin\phi . \nonumber
\end{eqnarray}
Of course, we request that $(\rho,q, \phi)$ are $2 \pi$-periodic with respect to $\theta$.  We supplement the relaxation system with similar boundary conditions as (\ref{sysbdry}). First, we request that the mass flux vanishes on $\partial {\mathcal D}$, implying that
$$ (\rho q \cos \phi (r,\theta,t))|_{r=R_1,R_2}= 0, \quad \forall \theta \in [0, 2\pi], \quad \forall t \in {\mathbb R}_+. $$
When $\eta \ll 1$,  the relaxation term forces $q \approx 1$. Therefore, we assume the same boundary condition (\ref{sysbdry}) as for the SOH model, supplemented with the condition that $q=1$, namely
\begin{equation}
\phi (r,\theta,t))|_{r=R_1,R_2}=\pm \frac{\pi}{2}, \quad q (r,\theta,t))|_{r=R_1,R_2}=1, \quad \forall \theta \in [0, 2\pi], \quad \forall t \in {\mathbb R}_+. 
\label{eq:relbdryphi}
\end{equation}

We have the following theorem, whose proof is analogous to that of Proposition 3.1 in \cite{Motsch_Navoret_MMS11} and is omitted.

\begin{theorem}
The relaxation model (\ref{relaxcylin1})-(\ref{relaxcylin3}) with boundary conditions (\ref{eq:relbdryphi}) converges to the original model (\ref{sys1}), (\ref{sys2}) with boundary conditions (\ref{sysbdry}) as $\eta$ goes to~$0$.
\end{theorem}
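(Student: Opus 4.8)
The plan is to follow the Hilbert (Chapman--Enskog) expansion used to prove Proposition~3.1 of \cite{Motsch_Navoret_MMS11}, adapted to polar coordinates. The only new feature is that the hyperbolic fluxes and source terms of (\ref{relaxcylin1})--(\ref{relaxcylin3}) now carry the geometric factors $1/r$, $\sin\phi$ and $\cos\phi$ produced by the annulus; since these are smooth and bounded on $[R_1,R_2]$, they do not interact with the stiff relaxation term and play no role in the limit $\eta\to 0$, so the argument of \cite{Motsch_Navoret_MMS11} carries over with essentially only notational changes.

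First I would decouple the stiff direction. Writing $\Omega^\eta = q^\eta\,\omega^\eta$ with $\omega^\eta = (\cos\phi^\eta,\sin\phi^\eta)$, $|\omega^\eta|=1$, one projects the momentum equation (equivalently, the pair (\ref{relaxcylin2})--(\ref{relaxcylin3})) onto $\omega^\eta$ and onto $(\omega^\eta)^\perp$. The relaxation force $\tfrac{\rho^\eta}{\eta}(1-|\Omega^\eta|^2)\Omega^\eta$ is parallel to $\omega^\eta$, so the $(\omega^\eta)^\perp$-component is non-stiff; after division by $\rho^\eta q^\eta>0$ it is a transport-type equation for $\phi^\eta$ that converges formally, as $q^\eta\to 1$, to (\ref{sys2}), while inserting $q^\eta=1$ into the mass equation (\ref{relaxcylin1}) returns (\ref{sys1}). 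The $\omega^\eta$-component is the stiff scalar equation for $q^\eta$: plugging in $q^\eta = 1+\eta q_1+O(\eta^2)$, the $O(1/\eta)$ balance forces $1-(q^\eta)^2=O(\eta)$ consistently, and the $O(1)$ balance determines $q_1$ in terms of $(\rho,\phi)$. The boundary conditions (\ref{eq:relbdryphi}) collapse to (\ref{sysbdry}) in the limit because $q^\eta\to 1$.

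To upgrade this formal computation to an actual convergence statement I would establish a priori estimates uniform in $\eta$ on a time interval $[0,T]$ independent of $\eta$, using the relaxation entropy $V(\Omega)=\tfrac14(|\Omega|^2-1)^2$ (the source in (\ref{relaxmodel-2}) equals $-\tfrac{\rho^\eta}{\eta}\nabla_\Omega V(\Omega^\eta)$) together with the $H^s$ energy estimates available from the fact that the conservative part of (\ref{relaxmodel}) is hyperbolic and endowed with an entropy near the constraint manifold $\{|\Omega|=1\}$, provided $\rho^\eta$ is bounded above and below. These estimates yield a dissipation bound $\tfrac1\eta\int_0^T\!\!\int\rho^\eta(1-|\Omega^\eta|^2)^2\,dx\,dt\le C$, hence $q^\eta\to 1$ almost everywhere; combined with Aubin--Lions compactness, this allows one to pass to the limit in (\ref{relaxcylin1}) and in the $(\omega^\eta)^\perp$-projected momentum equation and recover (\ref{sys1})--(\ref{sys2}). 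Since the limiting SOH system with data (\ref{sysbdry}) has a unique smooth solution by \cite{Degond_etal_MAA13}, the limit is independent of the extracted subsequence, so the whole family $(\rho^\eta,q^\eta,\phi^\eta)$ converges.

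The hard part will be closing the uniform-in-$\eta$ estimates: one must exploit the dissipative (sub-characteristic-type) structure of the hyperbolic--relaxation system, control the degeneracy of the stiff term where $\rho^\eta$ is small by propagating a strictly positive lower bound on $\rho^\eta$ through the mass equation, and check that the annular geometric terms enter only as lower-order perturbations of the energy balance. This is exactly the mechanism handled in \cite{Motsch_Navoret_MMS11}, which is why the detailed argument is omitted here.
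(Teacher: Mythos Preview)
Your approach is the paper's: the proof there reads, in full, ``analogous to Proposition~3.1 in \cite{Motsch_Navoret_MMS11} and is omitted,'' and your Hilbert expansion (project the momentum equation onto $\omega^\eta$ and $(\omega^\eta)^\perp$, read $q^\eta\to 1$ off the stiff parallel component, recover (\ref{sys1})--(\ref{sys2}) from the mass equation and the non-stiff perpendicular component) is precisely that argument transplanted to polar coordinates, with the correct observation that the $1/r$ geometric factors are smooth on $[R_1,R_2]$ and do not interfere with the relaxation.

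One caveat: Proposition~3.1 of \cite{Motsch_Navoret_MMS11} is a \emph{formal} asymptotic statement only. Your last two paragraphs---uniform-in-$\eta$ $H^s$ estimates, entropy dissipation $\tfrac{1}{\eta}\int\rho^\eta(1-|\Omega^\eta|^2)^2$, Aubin--Lions compactness, and uniqueness of the limit via \cite{Degond_etal_MAA13}---outline a genuinely rigorous programme that goes well beyond both that reference and what the present theorem is intended to assert. The sketch is plausible, but you should not write ``this is exactly the mechanism handled in \cite{Motsch_Navoret_MMS11}'': those uniform estimates are not carried out there, and closing them (in particular the sub-characteristic condition and the propagation of a positive lower bound on $\rho^\eta$ up to the boundary) would be new work.
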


\subsection{Relaxation system in conservative form.} The scheme developed in \cite{Motsch_Navoret_MMS11} relies on writing the hyperbolic part of the relaxation system in conservative form. Indeed, the use of a non-conservative form may lead to unphysical solutions, which are not valid approximations of the underlying particle system\cite{Motsch_Navoret_MMS11}. Introducing $(m, u, v)$ defined by
\begin{align*}
m = r\rho,\quad
u = r\rho q\cos(\phi+\theta), \quad
v = r\rho q\sin(\phi+\theta),
\end{align*}
Eqs.~(\ref{relaxcylin1}), (\ref{relaxcylin2}) can be rewritten in terms of the vector function $U = (m, u, v)$ as follows:

\begin{equation}\label{relaxvec}
\partial_t U + \frac{\partial}{\partial r}F(\theta, U) 
+ \frac{\partial}{\partial\theta}G(r,\theta, U) 
= \frac{1}{\eta} H(U), 
\end{equation}
where
\begin{equation}
H(U) = \left(\begin{array}{c} \displaystyle
0\\ \displaystyle
u\left(1-\frac{u^2+v^2}{m^2}\right)\\ \displaystyle
v\left(1-\frac{u^2+v^2}{m^2}\right)
\end{array}\right), \quad 
F(\theta,U) =\left(\begin{array}{c} \displaystyle  \vspace{0.25cm}
c_1(u\cos\theta+v\sin\theta)\\ \displaystyle  \vspace{0.25cm}
c_2\frac{u}{m}(u\cos\theta+v\sin\theta) + \Theta m\cos\theta\\ \displaystyle
c_2\frac{v}{m}(u\cos\theta+v\sin\theta) + \Theta m\sin\theta
\end{array}\right) , 
\end{equation}
and
\begin{equation}
G(r,\theta, U) =\frac{1}{r}\left(\begin{array}{c} \displaystyle \vspace{0.25cm}
c_1(v\cos\theta-u\sin\theta)\\ \displaystyle \vspace{0.25cm}
c_2\frac{u}{m}(v\cos\theta-u\sin\theta) - \Theta m\sin\theta\\ \displaystyle
c_2\frac{v}{m}(v\cos\theta-u\sin\theta) + \Theta m\cos\theta
\end{array}\right) . \vspace{0.2cm}
\end{equation}
Of course, we request that $(m,u,v)$ is $2 \pi$ periodic in $\theta$. The boundary conditions (\ref{eq:relbdryphi}) translate into: 
\begin{eqnarray*}
&&
\qquad (u \cos \theta + v \sin \theta)|_{r=R_1,R_2}=0, \quad
(-u \sin \theta + v \cos \theta)|_{r=R_1,R_2}=\pm 1,
\end{eqnarray*}

\subsection{Numerical method}
\label{subsec_nummet}
We apply the method proposed in \cite{Motsch_Navoret_MMS11}, which consists in splitting (\ref{relaxvec}) into a conservative step and a relaxation step. In the conservative step, we solve (\ref{relaxvec}) with $H=0$. In the relaxation step, we solve (\ref{relaxvec}) with $F=G=0$. When $\eta \ll 1$ this last step can be replaced by a mere normalization of $\Omega$ i.e. changing $(u,v)$ into $\frac{m}{\sqrt{u^2  + v^2}} (u,v)$. The conservative step is solved by classical shock-capturing schemes (see \cite{Motsch_Navoret_MMS11} for details).  We take uniform meshes for $r$ and $\theta$.  Careful accuracy tests (not reported here) have demonstrated that this method is of order $1$. 

\section{Comparison between the Linear and Nonlinear Models}
\label{SecNonlinearNum}

\subsection{Small perturbation}
We take a pure eigenmode as initial condition and compare the numerical solution of the nonlinear model to that of the linearized model. We take an initial condition given by
\begin{equation}
(\rho_I, \phi_I) = (\rho_s, \phi_s) + \varepsilon k_{nm}(\rho_{nm}, - \psi_{nm}), 
\label{eq:inicond}
\end{equation}
with $ (\rho_{nm}, - \psi_{nm})$ given by (\ref{eq:rhoperturb}).  Let $(\rho, \phi)$ denote the exact solution of the nonlinear model (\ref{sys1}), (\ref{sys2}) with boundary conditions (\ref{sysbdry}), $(\rho_\ell, \phi_\ell)$ the solution of the linearized system given by Theorem \ref{thm_pert_sum}, and $(\rho_h, \phi_h)$ the numerical solution of the nonlinear model computed thanks to the method summarized at Sec.~\ref{subsec_nummet}. Consider $\rho$ for example. Formally, we have $\rho- \rho_\ell = {\mathcal O}(\varepsilon^2)$ (we neglect the errors due to the numerical computation of the functions $\hat\rho_{mn}$ which are small), while $\rho- \rho_h= {\mathcal O}(h)$ (since the scheme is of order $1$). Consequently, we have 
\begin{equation}
\rho_h - \rho_\ell =  {\mathcal O}(\varepsilon^2) + {\mathcal O}(h). 
\label{eq:error_linear}
\end{equation}
Fig. \ref{l2mode_multi_n3} (a) shows the $L^1$-distance (below referred to as the ``error'') between the numerical solution of the nonlinear model and that of the linearized system at time $t=0.5$, as a function of the meshsize $h$ for an initial condition (\ref{eq:inicond}) corresponding to mode $(n,m) = (3,2)$ and $k_{nm} = 0.01$. Different perturbation magnitudes $\varepsilon = 0.001$ (red squares), $\varepsilon = 0.0005$ (green triangles), $\varepsilon = 0.0001$ (blue crosses) are used. The parameter values are those of~(\ref{eq:parval}). We notice that for a given value of $\varepsilon$, the error decreases with decreasing values of $h$ until $h$ reaches the approximate values $h = 0.01$ (for $\varepsilon =0.001$ and $\varepsilon = 0.0005$) and $h=0.005$ (for $\varepsilon = 0.0001$). When $h$ is decreased further, the error stays constant but this constant is smaller for smaller $\varepsilon$. This suggests that, consistently with (\ref{eq:error_linear}), the error is dominated by the linearization error for small values of $h$.  This interpretation is also consistent with the observation that the threshold value of $h$ under which the error saturates decreases when $\varepsilon$  becomes smaller. However, the decay of the error seems to be first order in $\varepsilon$ instead of being second order as inferred from (\ref{eq:error_linear}).  This suggests that nonlinear effects are rapidly moving the solution away from the linear regime. However, other diagnostics discussed in the section below show that the linearized model actually provides a very good approximation of the nonlinear model in practical situations.

\subsection{Large perturbations} In this section, we take larger values of $\varepsilon$ and quantify the difference between the solutions of the nonlinear and linearized models. Due to nonlinear mode coupling, it is expected that, even with a pure mode initial condition, new modes will be gradually turned on by the nonlinearity. Let 
$ (\tilde \rho_{h,\varepsilon}, \tilde \phi_{h,\varepsilon}) = \varepsilon^{-1} ( (\rho_h, \phi_h) - (\rho_s, \phi_s) )$ denote the difference between the numerical solution of the nonlinear model and the steady-state, rescaled by the factor $\varepsilon^{-1}$. We define the energy ${\mathcal E}(t)$ of the perturbation as 
$$ {\mathcal E}(t) = \left\langle \hspace{-0.2cm}\left\langle
\left( \begin{array}{c} \tilde \rho_{h,\varepsilon}  \\ \tilde \phi_{h,\varepsilon} \end{array} \right) , 
\left( \begin{array}{c} \tilde \rho_{h,\varepsilon}  \\ \tilde \phi_{h,\varepsilon} \end{array} \right)
\right\rangle \hspace{-0.2cm}\right\rangle = \frac{1}{2} \sum_{n\geq 1, \, m \geq 0} k_{nm}^2(t) + \frac{1}{2} \sum_{m \geq 1} k_{0 \, 2m}^2(t), $$
where the double bracket refers to the inner product (\ref{eq:innerprod2}) and $k_{nm}(t)$ is given by~(\ref{eq:defknm}) with $(\rho_I, \varphi_I)$ replaced by 
$(\tilde \rho_{h,\varepsilon}, \tilde \phi_{h,\varepsilon})$.  The quantity $k_{nm}^2(t)/2$ (respectively $k_{0 \, 2m}^2(t) /2$) represents the energy stored in the modes $(\pm n , m)$ (respectively in the modes $(0,2m-1)$ and $(0,2m)$) at time $t$. In the purely linear case, $k_{nm}(t)$ is independent of $t$. In the nonlinear case, its variation with $t$ provides a measure of how the nonlinearity affects the amplitude of the corresponding modes. 

The initial data is a perturbation of the steady-state by a pure eigenmode, i.e. 
\begin{equation}
\left(\begin{array}{c} \tilde \rho_{h,\varepsilon} \\ \tilde \phi_{h,\varepsilon} \end{array}\right) \bigg|_{t=0}
=  k_{n_0, m_0}
\left(\begin{array}{c}\hat{\rho}_{n_0,m_0}\cos(n_0\theta)\\-\hat{\psi}_{n_0,m_0}\sin(n_0\theta)\end{array}\right), 
\label{eq:modeinitial}
\end{equation}
with $(n_0, m_0) = (3,2)$, $k_{n_0,m_0} = 0.01$. We test different values of $\varepsilon$. For this initial condition, Figs. \ref{l2mode_multi_n3} (b, c) show $k_{nm}(t)$ as a function of $t$ in log-log scale for $\varepsilon = 1$ and $1.5$ respectively. The initial mode $(n_0,m_0)=(3,2)$ is represented with blue X's. In Fig. \ref{l2mode_multi_n3} (b) corresponding to a moderate perturbation $\varepsilon = 1$, only modes  $(n,m)= (0,4)$ (red squares) and $(n,m)=(6,4)$ (purple triangles) appear. Mode $(0,4)$ appears first but saturates while mode $(6,4)$ appears later but reaches higher intensities. Both modes eventually saturate. Likewise, the initial mode decays as higher order modes (not represented in the figure) are turned on by the nonlinearity.  The initial growth of modes $(0,4)$ and $(6,4)$ is linear in log-log scale, which corresponds to a power law growth in time. The two modes have comparable growth rates (the two increasing parts of the curves are parallel straight lines). In the case of a larger perturbation $\varepsilon = 1.5$ displayed in Fig. \ref{l2mode_multi_n3} (c) the situation is strikingly more complex, with a wealth of other modes appearing. In addition to modes $(n,m)=(0,4)$ (red squares) and $(6,4)$ (purple triangle), we notice mode $(6,3)$ (cyan diamonds) and $(3,1)$ (green circles). Mode $(6,3)$ which was absent from Fig. \ref{l2mode_multi_n3} (b) now overtakes mode $(6,4)$ at the beginning, but the latter reaches a higher intensity after some time. The decay of the initial mode $(3,2)$ is also more pronounced. It should be noted that some modes stay extinct all the time. This shows that some pairs of modes are only weakly coupled by the nonlinearity. 

In order to illustrate the successive turn on of the various modes, we have arbitrarily fixed a threshold value $k_t= 0.0005$ (represented by the horizontal dashed blue lines on Figs. \ref{l2mode_multi_n3} (b, c)).  In Fig. \ref{l2mode_multi_n3} (d), we have reported the first time $t_1$ at which $k_{nm}(t)$ reaches the values $k_t$ and plotted it as a function of $\varepsilon$ in log-log scale, for modes $(n,m) = (6,4)$ (blue X's), $(n,m)=(0,4)$ (blue squares) and $(n,m) = (6,3)$ (red circles). The corresponding times $t_1$ are also indicated explicitly on Figs. \ref{l2mode_multi_n3} (b, c)). Fig. \ref{l2mode_multi_n3} (d) shows that for small $\varepsilon$, mode $(6,4)$ is the earliest one to turn on. But as $\varepsilon$ increases, this feature changes and mode $(0,4)$ (which was extinct for smaller value of $\varepsilon$) appears earlier. When $\varepsilon$ is increased further, mode $(6,3)$ also appears, later than $(0,4)$ but earlier than $(6,4)$. This illustrates that the nonlinear mode coupling can exhibit rather complex features and non-monotonic behavior as a function of the perturbation intensity $\varepsilon$. However, even for these large perturbation cases, the amplitude of the initial mode always remains one order of magnitude larger than those of the successively excited modes. This shows that the linear model still provides a fairly good approximation of the solution of the nonlinear model.

\begin{figure}[!ht]
\begin{center}
\subfigure[${\rm dist}_{\rho}$]{\includegraphics[width = 0.49\textwidth]{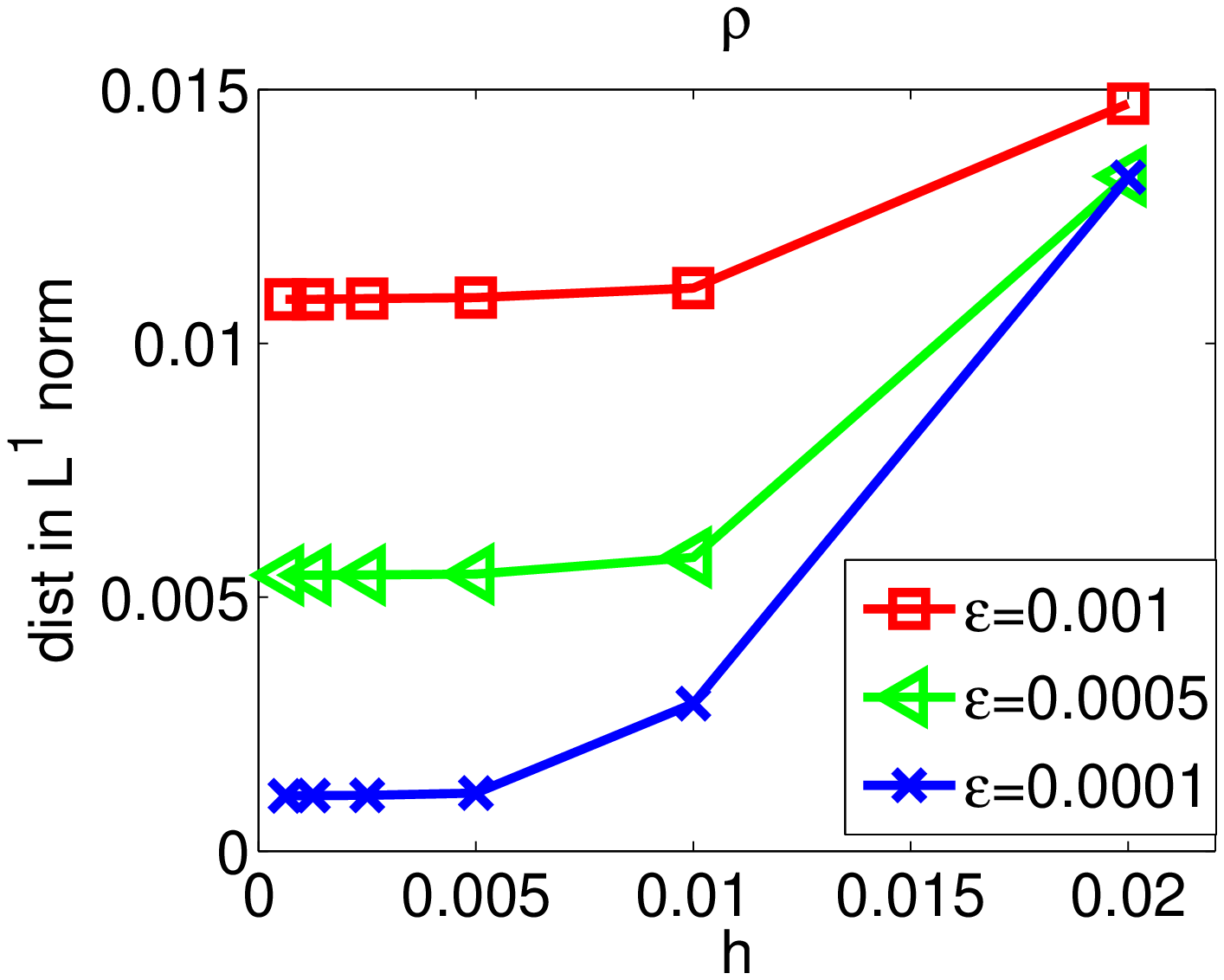}}
\subfigure[$\varepsilon = 1$]{\includegraphics[width = 0.49\textwidth]{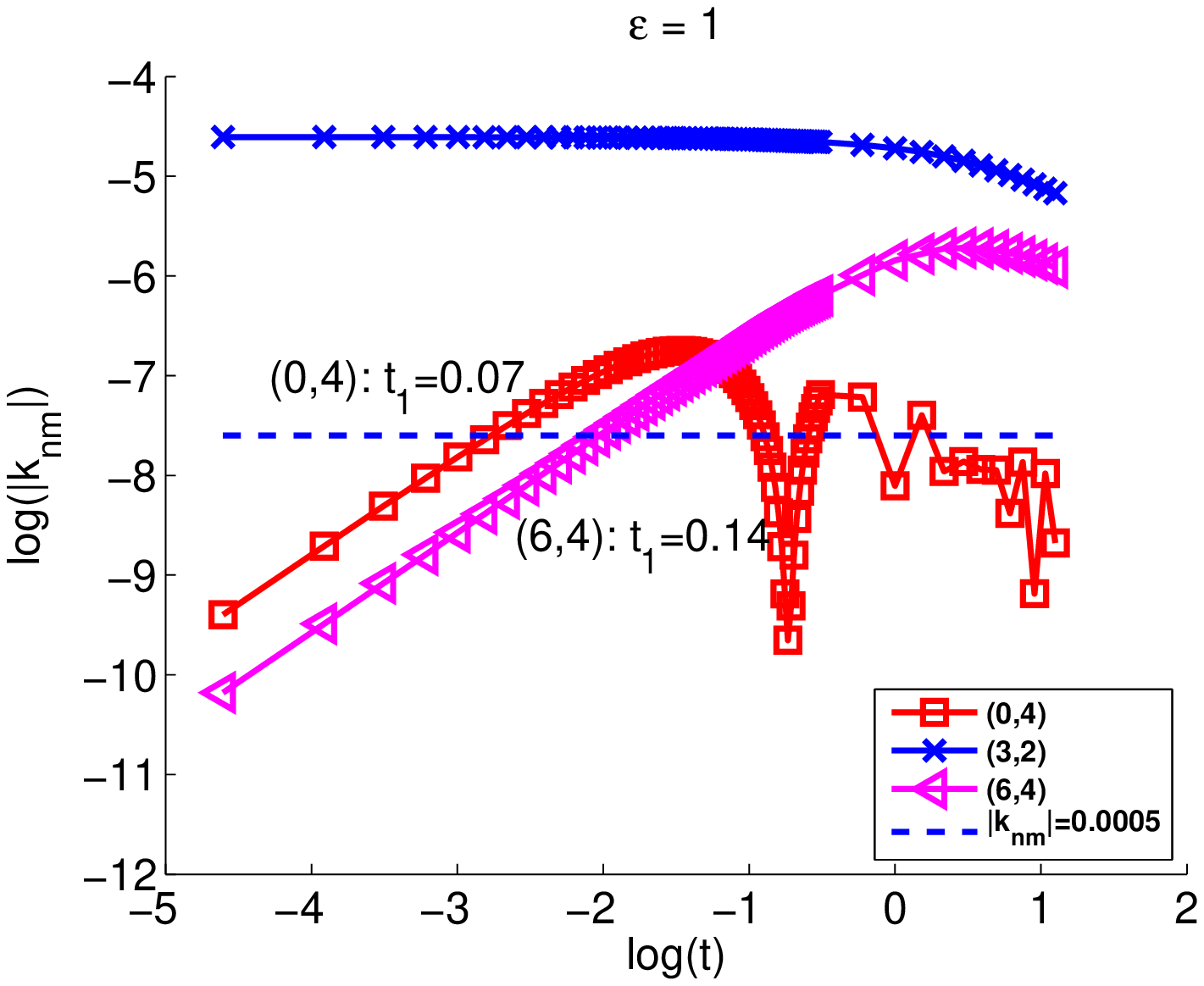}}
\subfigure[$\varepsilon = 1.5$]{\includegraphics[width = 0.49\textwidth]{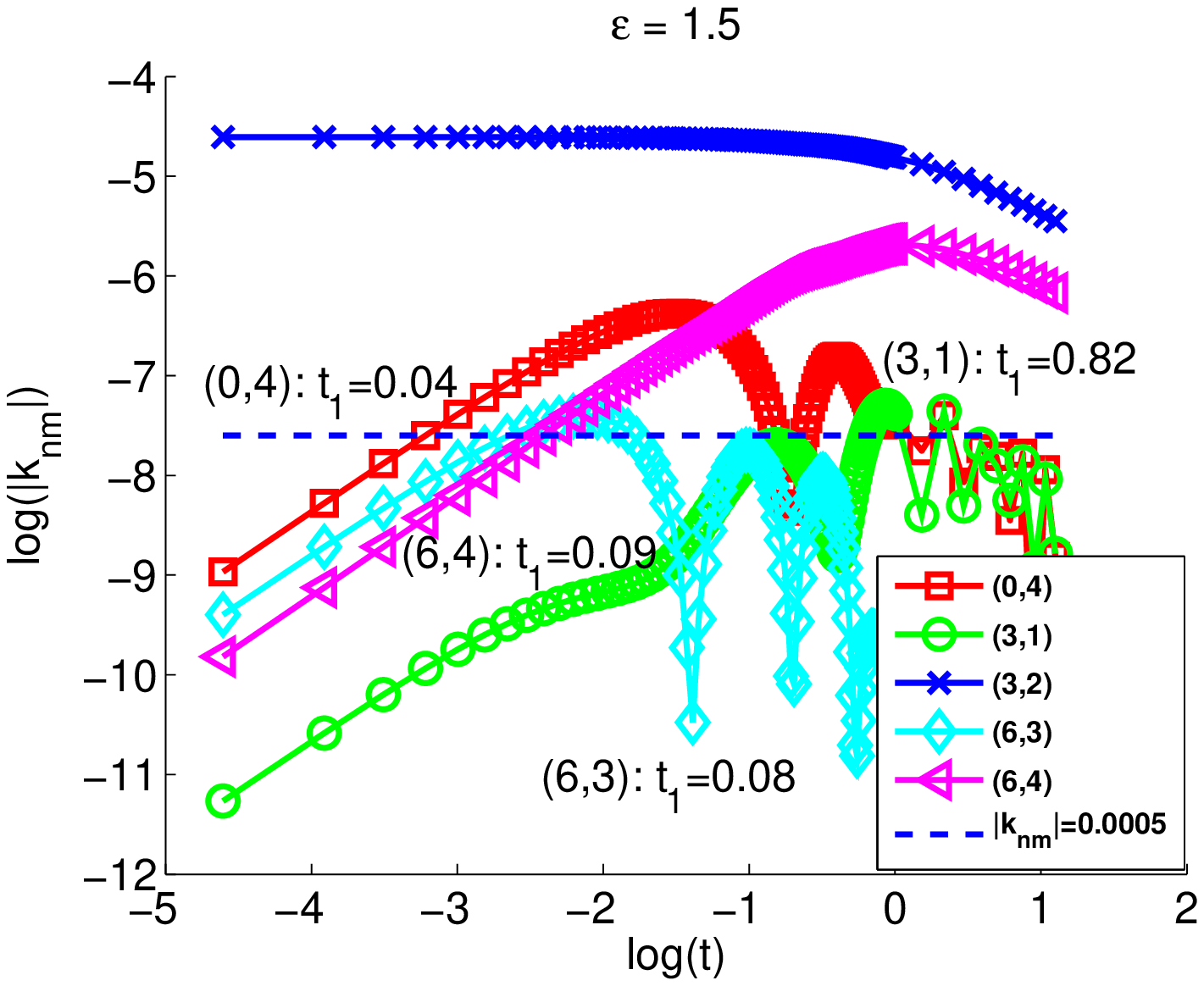}}
\subfigure[$\log(t_1)$ versus $\log(\varepsilon)$]{\includegraphics[width = 0.45\textwidth]{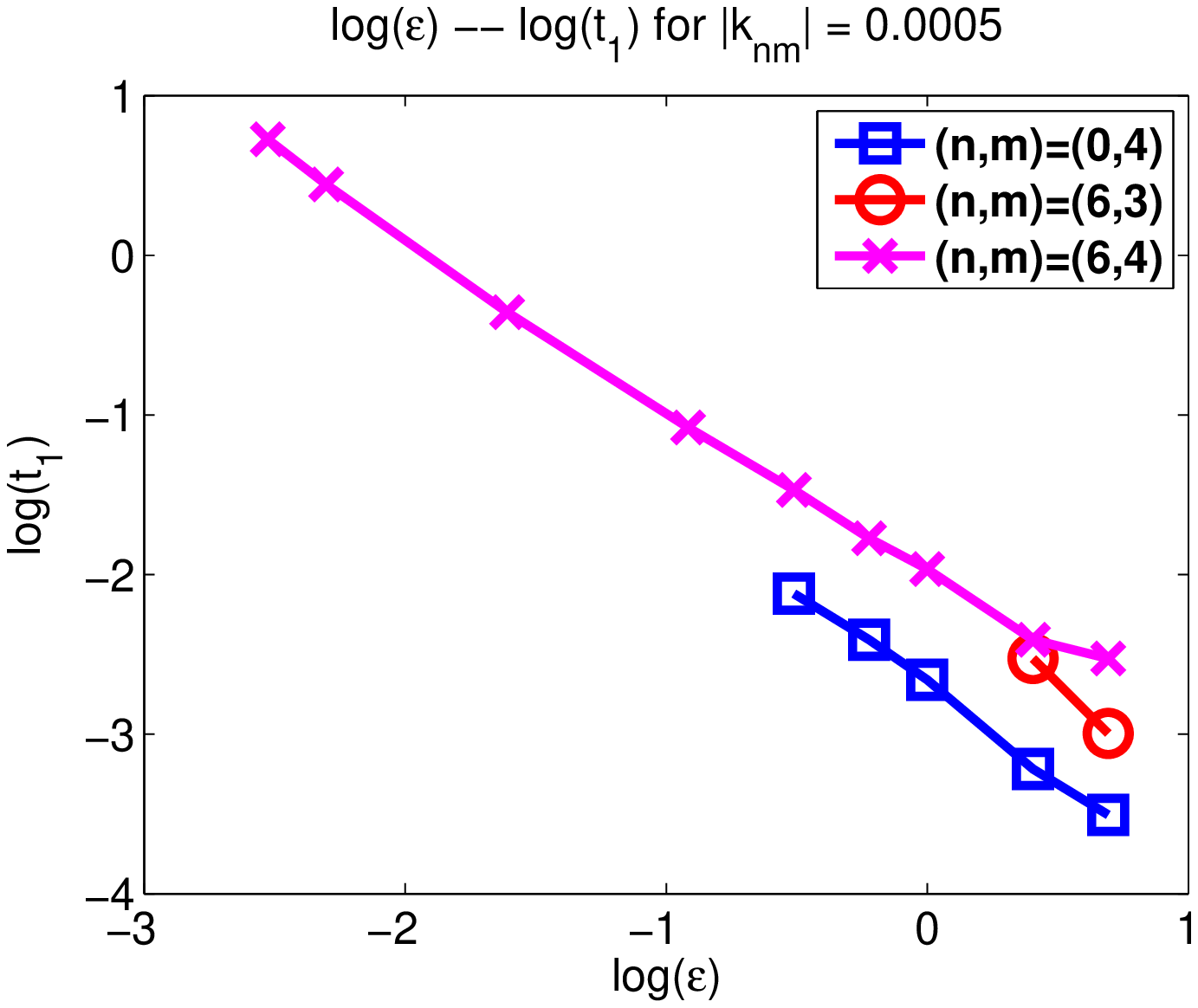}}\\
\end{center}
\caption{
{\bf (a) small perturbation case.} $L^1$-distance between the numerical solution of the nonlinear model and that of the linearized system at time $T=0.5$, as a function of the meshsize $h$ for an initial condition (\ref{eq:inicond}) corresponding to mode $(n,m) = (3,2)$: $\varepsilon = 0.001$ (red squares), $\varepsilon = 0.0005$ (green triangles), $\varepsilon = 0.0001$ (blue crosses). We use $\Delta t = 0.0001$,  $N_r =N_{\theta} = 10$, $20$, $40$, $80$, $160$, $320$ mesh points. The parameter values are those of (\ref{eq:parval}). \hspace{13cm} 
{\bf (b, c, d) large perturbation case.} (b) $k_{nm}$ as a function of $t$ in $\log$-$\log$ scale for $\varepsilon = 1$ and for $(n,m)=(3,2)$ (the initial mode, blue X's), $(0, 4)$ (red squares) and $(6, 4)$ (purple triangles). (c) $k_{nm}$ as a function of $t$ in $\log$-$\log$ scale for $\varepsilon = 1.5$ and for $(n,m)=(3,2)$ (the initial mode, blue X's), $(0, 4)$ (red squares), $(6, 4)$ (purple triangles), $(6, 3)$ (cyan diamonds) and $(3, 1)$ (green circles).  (d): Mode turn-on time $t_1$ as a function of $\varepsilon$ for modes $(n,m) = (6,4)$ (blue X's), $(0,4)$ (blue squares) and $(6,3)$ (red circles). The mode turn-on time $t_1$ is the first time for which $k_{nm}(t)$ reaches the threshold value $k_t$ represented by the horizontal dashed blue line on Figs. (b) and (c). The parameter values are those of (\ref{eq:parval}) and $\Delta t = 0.0005$, $N_r = N_{\theta} = 400$.}
\label{l2mode_multi_n3}
\end{figure}

We now investigate the qualitative features of the solution in a large amplitude case. Figs. \ref{n4nu1t0_2} shows the numerical solution corresponding to a pure mode initial data~(\ref{eq:modeinitial}) with $(n_0,m_0) = (4,1)$, $k_{n_0,m_0} = 1$ and $\varepsilon = 0.01$. It displays the density $\rho$ at times $t=0$ (left) and $t=2$ (right) as a function of the two-dimensional position coordinates $(x,y)$ in the annulus, in color code (color bar to the right of the figure). We observe that the solution remains $\pi/2$-periodic in the $\theta$-direction (as the linear mode would be) but the density contours have lost their sinusoidal shape. Instead, oblique shock waves have formed and are reflected by the boundary. These simulations suggest the existence of unsmooth periodic solutions of the nonlinear SOH model in this geometric configuration. 

\begin{figure}[!ht]
\begin{center}
\subfigure[$t=0$]{\includegraphics[width = 0.49\textwidth]{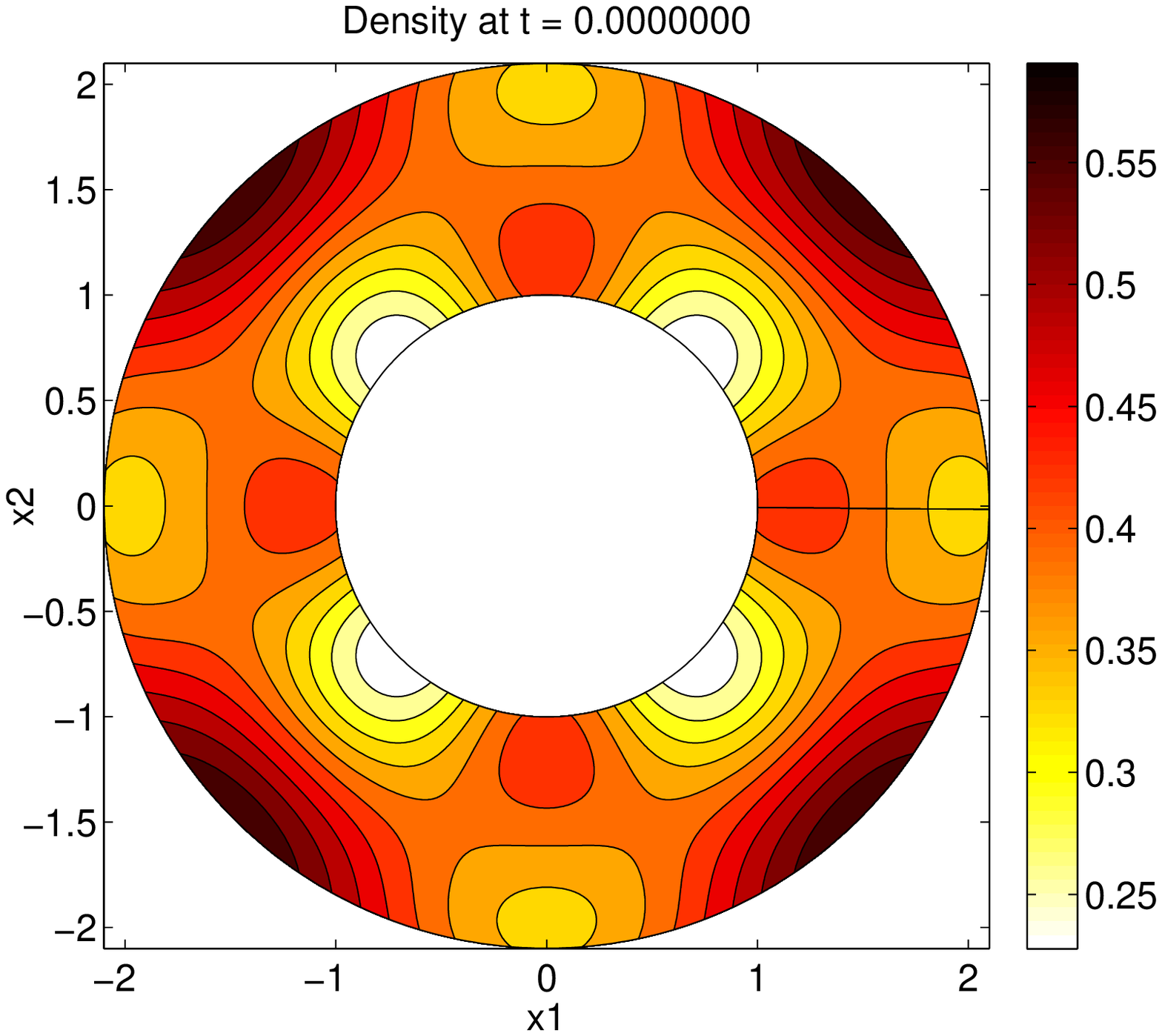}}
\subfigure[$t=2$]{\includegraphics[width = 0.49\textwidth]{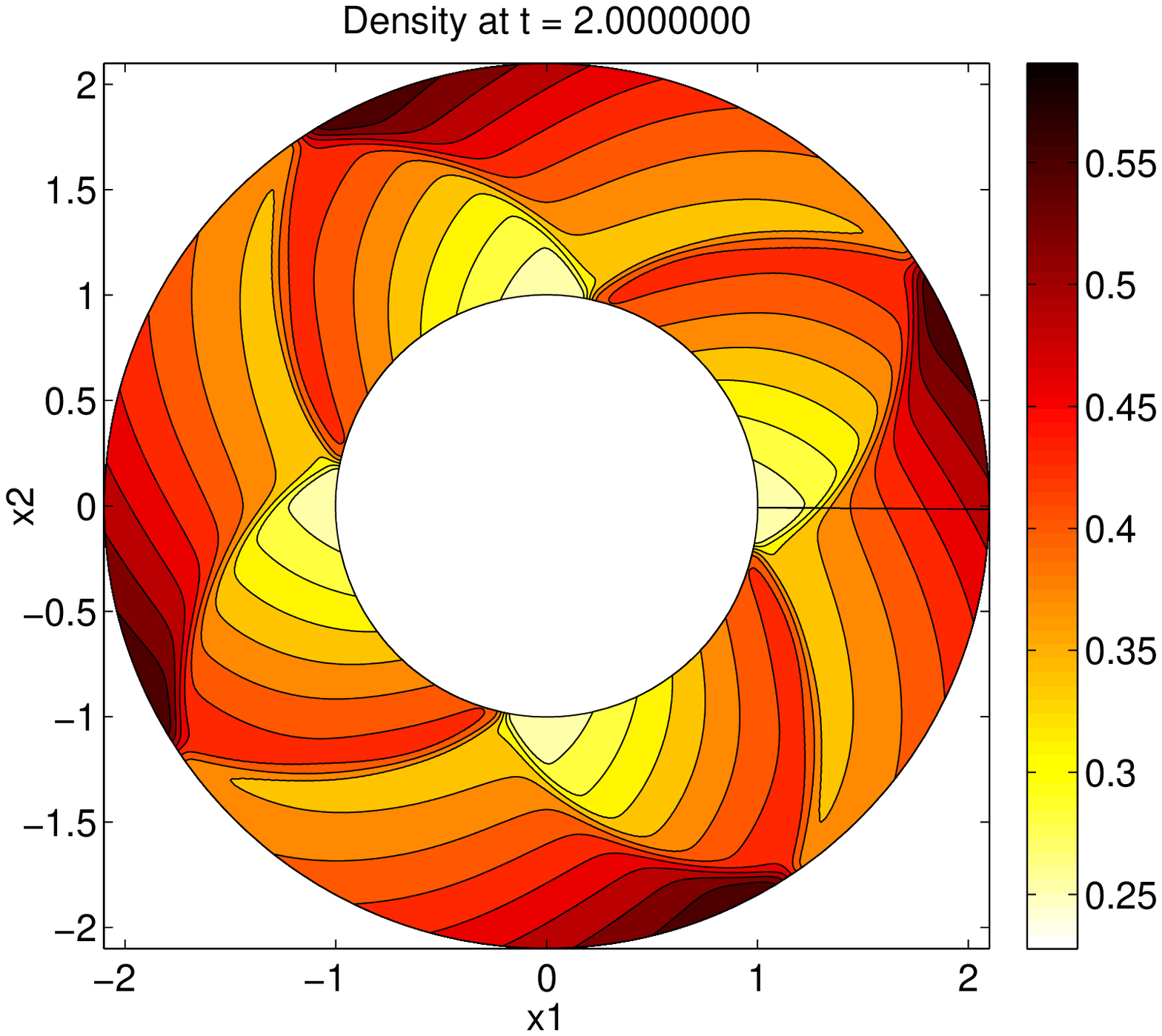}} \\
\end{center}
\caption{Density $\rho$ as a function of the two-dimensional position coordinates $(x,y)$ in the annulus. The initial condition is given by (\ref{eq:modeinitial}) with $(n_0,m_0) = (4,1)$, $k_{n_0,m_0} = 1$ and $\varepsilon = 0.01$. The solution is represented at time $t=0$ (left) and $t=2$ (right). The density is color coded according to the color bar to the right of the figure. The parameters are given by (\ref{eq:parval}), $\Delta t = 0.0005$, and $N=400$ mesh points in both the radial and azimuthal directions have been used.}
\label{n4nu1t0_2}
\end{figure}

We now investigate a large perturbation amplitude case with a random intitial data. More precisely, the initial data is given by a random combination of eigenmodes such that $n \leq 12$ and $m \leq 12$ as follows:
\begin{equation}
\left(\begin{array}{c} \tilde \rho_{h,\varepsilon} \\ \tilde \phi_{h,\varepsilon} \end{array}\right) \bigg|_{t=0}
=  \sum_{\substack{0\leq n, m\leq 12\\ (n,m) \neq (0,0)}}k_{nm}
\left(\begin{array}{c}\hat{\rho}_{nm}\cos(n\theta+\varphi_{nm})\\-\hat{\psi}_{nm}\sin(n\theta+\varphi_{nm})\end{array}\right),
\label{eq:inicondrandom}
\end{equation}
where $k_{nm}$ and $\varphi_{nm}$ are randomly sampled in the intervals $(0,1]$ and $[0,2\pi]$ respectively, according to the uniform distribution. The numerical simulation is performed with $N_r = N_{\theta} = 640$, $\Delta t = 0.0005$ and $\varepsilon = 0.0025$. Fig. \ref{fig:random} shows the numerical solution at time $t=2$ (which approximately corresponds to the rotation of the fluid by a quater of a circle). It displays the density $\rho$ as a function of the two-dimensional position coordinates $(x,y)$ in the annulus, in color code (color bar to the right of the figure). Fig. \ref{fig:random} (a) shows the solution of the linearized model, obtained by summation of the corresponding eigenmodes, while Fig. \ref{fig:random} (b) displays the numerical solution of the nonlinear model with the same initial condition. We observe a very good agreement between the linearized and nonlinear solutions, in spite of a fairly large perturbation amplitude. By looking carefully, one notices that the nonlinear solution has slightly lower maxima and larger minima, due to the action of numerical diffusion (which is absent from the linearized solution). The nonlinear solution also exhibits steeper gradients due to nonlinear shock formation. 

The use of the linearized solution results in considerable computational speed-up compared to that of the nonlinear one. Indeed, the computation of the eigenmodes and their summation to construct the solution is almost instantaneous on a standard laptop. By comparison, the computation of the nonlinear solutions takes of the order of an hour. Therefore, given the considerable computation speed-up, we consider that the performances of the linearized model are excellent. These performances make the linearized model a model of choice to perform parameter calibration on experimental data. Indeed, parameter calibration involves the iterative resolution of a minimization problem which consists of finding the set of parameters which minimize the distance between the solution and the data. With the linearized model, this calibration phase can be expected to require very little computational time. This is important, since this set of parameters is expected to change from one experiment to the next and consequently, the calibration phase must be performed for each experiment. A real-time analysis of an experiment therefore requires a very efficient algorithm.

\begin{figure}[!ht]
\begin{center}
\subfigure[Linear model]{\includegraphics[width = 0.49\textwidth]{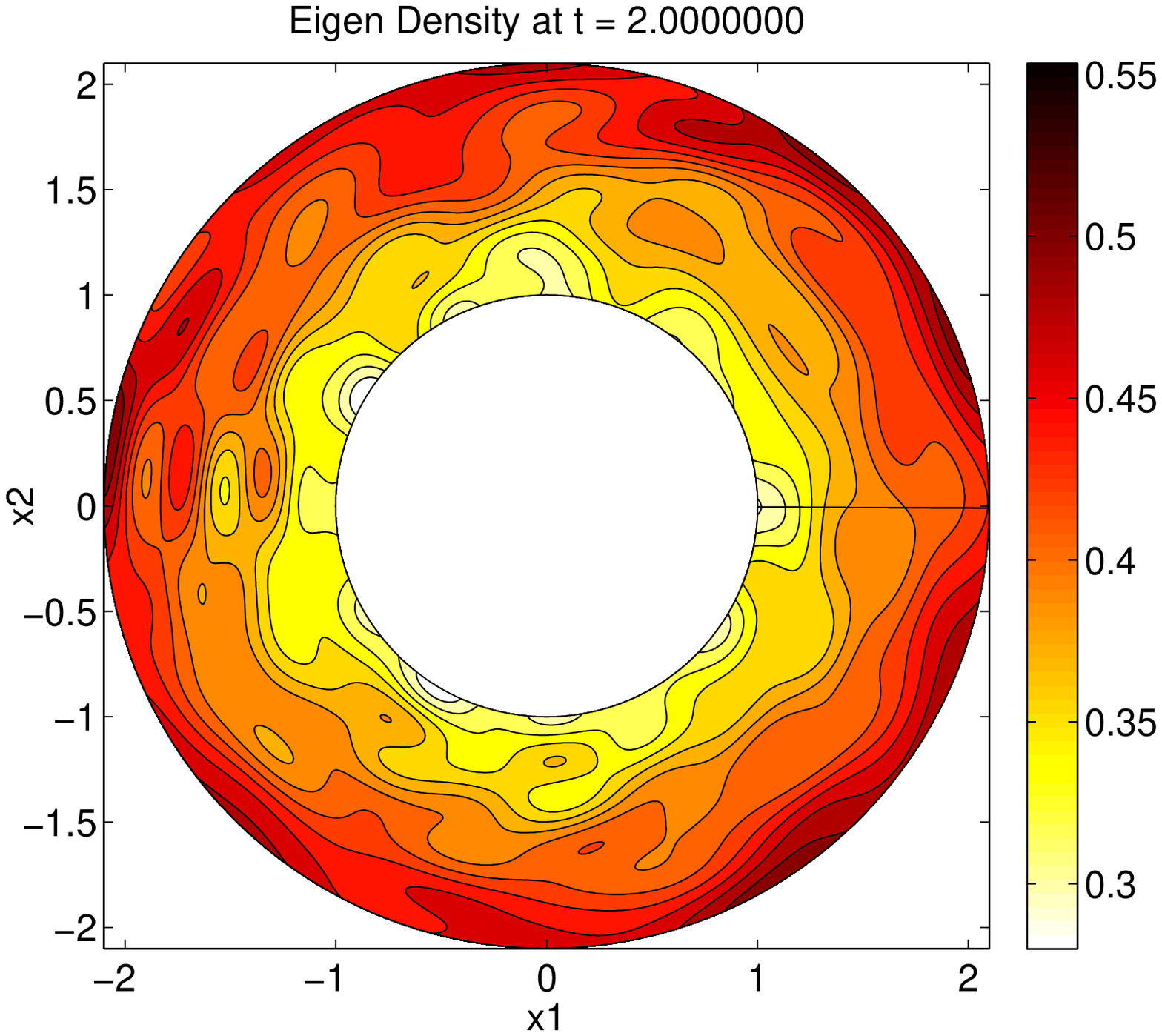}}
\subfigure[Nonlinear model]{\includegraphics[width = 0.49\textwidth]{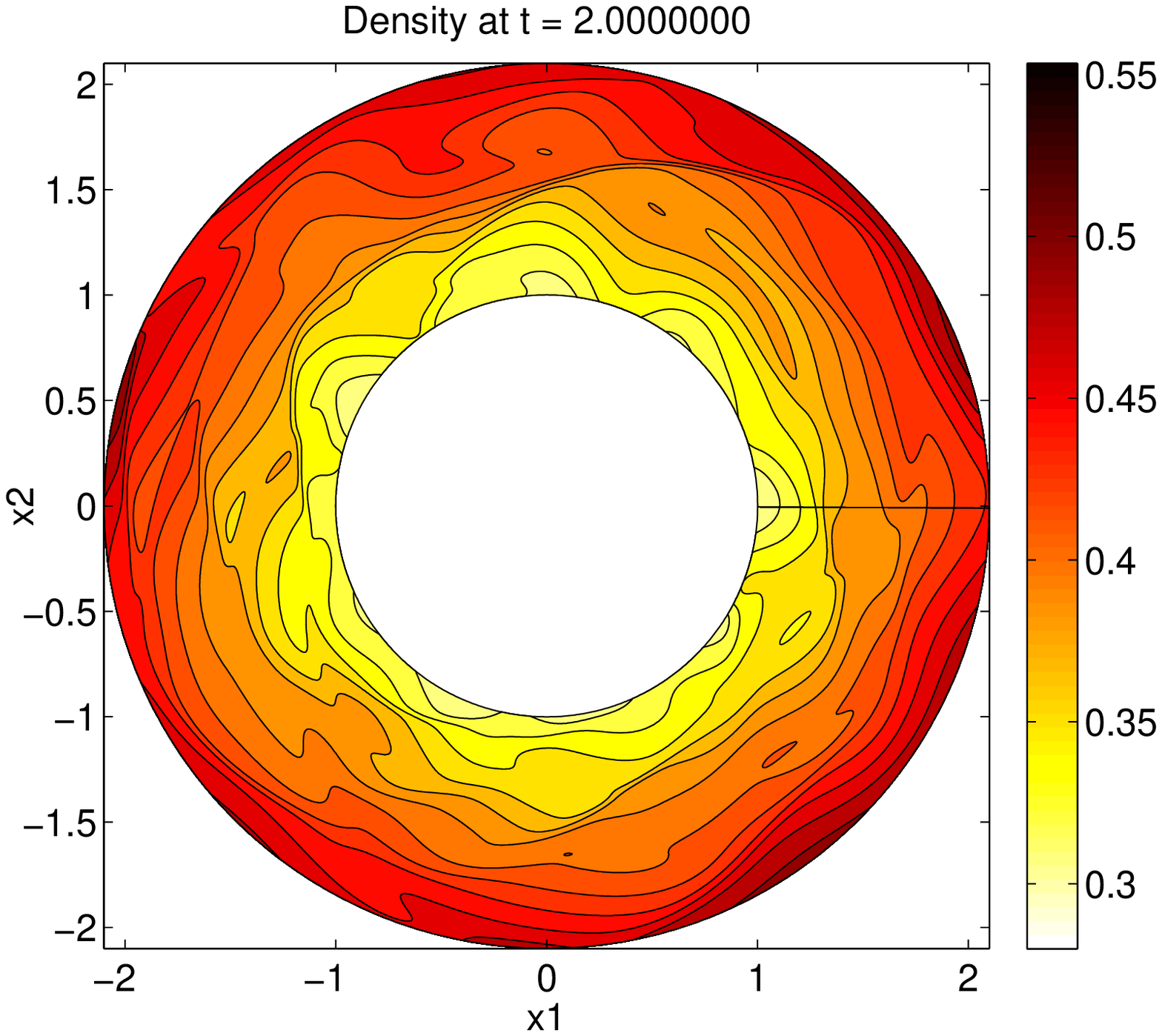}} \\
\end{center}
\caption{
Density $\rho$ as a function of the two-dimensional position coordinates $(x,y)$ in the annulus. The initial condition is given by (\ref{eq:inicondrandom}) with $k_{nm}$ and $\varphi_{nm}$ randomly chosen in the intervals $(0,1]$ and $[0,2\pi]$ respectively according to the uniform distribution, and $\varepsilon = 0.0025$. (a) Linearized solution. (b) Nonlinear solution. For the latter, the numerical simulation is performed with $N_r = N_{\theta} = 640$ and $\Delta t = 0.0005$. The solution is represented at time $t=2$. The density is color coded according to the color bar to the right of the figure. We observe a very good agreement between the linearized and nonlinear solutions, in spite of a fairly large perturbation amplitude.}
\label{fig:random}
\end{figure}

\section{Conclusion}
\label{sec:conclu}
In this paper, we have studied the SOH model on an annular domain. We have linearized the system about perfectly polarized steady-states. and shown that the resulting system has are only pure imaginary eigenvalues and that they form a countable set associated to an ortho-normal basis of  eigenvectors. A numerical scheme for the fully nonlinear system has been proposed. Its results are consistent with the modal analysis for small perturbations  of polarized steady-states. For large perturbations, nonlinear mode-coupling has been shown to result in the progressive turn-on of new modes in a complex fashion. Finally, we have assessed the efficiency of the modal decomposition to analyze the complex patterns of the solution. 
In future work, we will gradually include more physical effects in the model such as adding a repulsive force between the particles to prevent the formation of large concentrations, or immersing the particles in a surrounding fluid to give a better account of the dynamics of active particle suspensions like sperm. Finally, we plan to use the modal analysis to accurately calibrate the model against experimental observations of collective motion.


\begin{thebibliography}{99}
\bibitem{Aoki_JapanSocSciFish82}
I. Aoki, A simulation study on the schooling mechanism in fish, {\it Bulletin of the Japan
Society of Scientific Fisheries} {\bf 48} (1982) 1081--1088.

\bibitem{Bellomo_Soler_M3AS12}
N. Bellomo and J. Soler, On the mathematical theory of the dynamics of
swarms viewed as complex systems, {\it Math. Models Methods Appl. Sci.} {\bf 22 Supp. 1} (2012) 1140006.

\bibitem{Bertin_etal_PRE06}
E. Bertin, M. Droz and G. Gr\'egoire, Boltzmann and hydrodynamic description for
self-propelled particles, {\it Phys. Rev.} {\bf E74} (2006) 022101.

\bibitem{Bertin_etal_JPA09}
E. Bertin, M. Droz and G. Gr\'egoire, Hydrodynamic equations for self-propelled particles:
Microscopic derivation and stability analysis, {\it J. Phys. A: Math. Theor.} {\bf 42} (2009) 445001.

\bibitem{Bowman_Dover58}
F. Bowman, {\it Introduction to Bessel Functions}, (Dover, 1958).

\bibitem{Brezis_Springer11}
H. Brezis, {\it Functional Analysis, Sobolev Spaces and Partial Differential Equations}, (Springer, 2011).

\bibitem{Bertozzi_etal_Nonlinearity09}
A. L. Bertozzi, J. A. Carrillo and T. Laurent, Blow-up in multidimensional aggregation equations
with mildly singular interaction kernels, {\it Nonlinearity} {\bf 22} (2009) 683--710.

\bibitem{Bolley_etal_AML12}
F. Bolley, J. A. Ca\~nizo and J. A. Carrillo, Mean-field limit for the stochastic Vicsek model,
{\it Appl. Math. Lett.} {\bf 25} (2012) 339--343.

\bibitem{Buhl_etal_Science06}
J. Buhl, D. Sumpter, I. Couzin, J. Hale, E. Despland, E. Miller and S. Simpson, From disorder
to order in marching locusts, {\it Science} {\bf 312} (2006) 1402--1406.

\bibitem{Chate_etal_PRE08}
H. Chat\'e, F. Ginelli, G. Gr\'egoire and F. Raynaud, Collective motion of self-propelled
particles interacting without cohesion, {\it Phys. Rev.} {\bf E77} (2008) 046113.

\bibitem{Chuang_etal_PhysicaD07}
Y-L. Chuang, M. R. D'Orsogna, D. Marthaler, A. L. Bertozzi and L. S. Chayes,
State transitions and the continuum limit for a 2D interacting, self-propelled particle
system, {\it Physica D} {\bf 232} (2007) 33--47.

\bibitem{Couzin_etal_JTB02}
I. D. Couzin, J. Krause, R. James, G. D. Ruxton and N. R. Franks, Collective
Memory and Spatial Sorting in Animal Groups, {\it J. Theor. Biol.} {\bf 218} (2002) 1--11.

\bibitem{Creppy_etal_brevet}
A. Creppy, P. Degond, O. Praud and F. Plourabou\'e, 
Dispositif de traitement d'un \'echantillon d'un fluide biologique actif, 
{\it French patent application FR 13 61662} (filed on Nov. 26, 2013).

\bibitem{Czirok_Vicsek_PhysicaA00}
A. Czirok and T. Vicsek, Collective behavior of interacting self-propelled particles, 
{\it Physica A} {\bf 281} (2000) 17--29.

\bibitem{Cucker_Smale_IEEETransAutomControl07}
F. Cucker and S. Smale, Emergent behavior in fmocks, 
{\it IEEE Transactions on Automatic Control} {\bf 52} (2007) 852--862.

\bibitem{Degond_Liu_M3AS12}
P. Degond and J-G. Liu, Hydrodynamics of self-alignment interactions with precession and derivation of the Landau-Lifschitz-Gilbert equation, 
{\it Math. Models Methods Appl. Sci.} {\bf 22 Suppl. 1} (2012) 1140001.

\bibitem{Degond_etal_MAA13}
P.~Degond, J.-G. Liu, S.~Motsch and V.~Panferov, Hydrodynamic models of self-organized dynamics: derivation and  existence theory,  
{\it Methods Appl. Anal.} {\bf 20} (2013) 89--114.

\bibitem{Degond_etal_M3AS14}
P. Degond, G. Dimarco and T. B. N. Mac, Hydrodynamics of the Kuramoto-Vicsek model of rotating self-propelled particles,  
{\it Math. Models Methods Appl. Sci.} {\bf 24} (2014) 277--325.

\bibitem{Degond_etal_arXiv:1404.4886}
P. Degond, G Dimarco, T. B. N. Mac and N. Wang, Macroscopic models of collective motion with repulsion, {\it submitted, Preprint arXiv:1404.4886} (2014).

\bibitem{Degond_Motsch_M3AS08}
P. Degond and S. Motsch, Continuum limit of self-driven particles with orientation interaction,  
{\it Math. Models Methods Appl. Sci.} {\bf 18 Suppl.} (2008) 1193--1215.

\bibitem{Degond_Yang_M3AS10}
P. Degond and T. Yang, Diffusion in a continuum model of self-propelled particles with alignment interaction, 
{\it Math. Models Methods Appl. Sci.} {\bf 20 Suppl.} (2010) 1459--1490.

\bibitem{Erban_Haskovec_KRM12}
R. Erban and J. Haskovec, From individual to collective behaviour of coupled velocity jump processes: a locust example, 
{\it Kinet. Relat. Models} {\bf 5} (2012) 817--842.

\bibitem{Fore_etal_Aquaculture09}
M. F{\o}re, T. Dempster, J. A. Alfredsen, V. Johansen and D. Johansson, Modelling of atlantic salmon ({\em Salmo salar} L.) behaviour in sea-cages: a lagrangian approach, 
{\it Aquaculture} {\bf 288} (2009) 196--204.

\bibitem{Johansson_etal_PlosOne14}
D. Johansson, F. Laursen, A. Fern\"o, J. E. Fosseidengen, P. Klebert, L. H. Stien, T. V{\aa}gseth and F. Oppedal, The Interaction between Water Currents and Salmon Swimming Behaviour in Sea Cages, 
{\it Plos One} {\bf 9} (2014) e97635.

\bibitem{Kato_Springer13}
T. Kato, {\it Perturbation Theory for Linear Operators}, (Springer, 2013).

\bibitem{Lukeman_etal_PNAS10}
R. Lukeman, Y.-X. Lib and L. Edelstein-Keshet, Inferring individual rules from collective behavior,
{\it Proc. Natl. Acad. Sci. USA} {\bf 107} (2010) 12576--12580.

\bibitem{Motsch_Navoret_MMS11}
S. Motsch and L. Navoret, Numerical simulations of a nonconvervative hyperbolic system with geometric constraints describing swarming behavior, 
{\it Multiscale Model. Simul.} {\bf 9} (2011) 1253--1275.

\bibitem{Moussaid_etal_PlosCB12}
M. Moussa\"{\i}d, E. G. Guillot, M. Moreau, J. Fehrenbach, O. Chabiron, S. Lemercier, J. Pettr\'e, C. Appert-Rolland, P. Degond and G. Theraulaz, Traffic Instabilities in Self-organized Pedestrian Crowds, 
{\it PLoS Computational Biology} {\bf 8} (2012) e1002442.

\bibitem{Peruani_etal_PRE06}
F. Peruani, A. Deutsch and M. B\"ar, Nonequilibrium clustering of self-propelled rods, {\it Phys. Rev.} {\bf E74} (2006) 030904(R).

\bibitem{Rejniak_Anderson_SysBiolMed11}
K. A. Rejniak and A. R. A. Anderson, Hybrid models of tumor growth, 
{\it Interdisciplinary Reviews: System Biology and Medicine} {\bf 3} (2011) 115--125.

\bibitem{Ratushnaya_etal_PhysicaA07}
V. I. Ratushnaya, D. Bedeaux, V. L. Kulinskii and A. V. Zvelindovsky, Collective behavior
of self-propelling particles with kinematic constraints: the relation between the discrete and
the continuous description, {\it Physica A} {\bf 381} (2007) 39--46.

\bibitem{Toner_Tu_PRL95}
J. Toner and Y. Tu, Flocks, Long-range order in a two-dimensional dynamical XY model: how birds
fly together, {\it Phys. Rev. Lett.} {\bf 75} (1995) 4326--4329.

\bibitem{Toner_etal_AnnPhys05}
J. Toner, Y. Tu and S. Ramaswamy, Hydrodynamics and phases of flocks, {\it Annals of Physics}
{\bf 318} (2005) 170--244.

\bibitem{Toner_etal_PRL98}
Y. Tu, J. Toner and M. Ulm, Sound waves and the absence of Galilean invariance in flocks,
{\it Phys. Rev. Lett.} {\bf 80} (1998) 4819--4822.

\bibitem{Topaz_etal_BMB06}
C. M. Topaz, A. L. Bertozzi and M. A. Lewis, A nonlocal continuum model for biological aggregation, 
{\it Bull. Math. Biol.} {\bf 68} (2006) 1601--1623.

\bibitem{Tunstrom_etal_PlosCB13}
K. Tunstrom, Y. Katz, C. C. Ioannou, C. Huepe, M. J. Lutz and I. D. Couzin, Collective States, Multistability and Transitional Behavior in Schooling Fish, {\it Plos Computational Biology} {\bf 9} (2013) e1002915.

\bibitem{Vicsek_etal_PRL95}
T. Vicsek, A. Czir\'ok, E. Ben-Jacob, I. Cohen and O. Shochet, Novel type of phase transition in a system of self-driven particles, 
{\it Phys. Rev. Lett.} {\bf 75} (1995) 1226--1229.

\bibitem{Yamao_etal_PlosOne11}
M. Yamao, H. Naoki and S. Ishii, Multi-Cellular Logistics of Collective Cell Migration, {\it Plos One} {\bf 6} (2011) e27950.

\end{thebibliography}
\end{document}